\newtheorem{theorem}{Theorem}
\newtheorem{lemma}{Lemma}
\newtheorem{corollary}{Corollary}
\newenvironment{proof}{\paragraph{Proof.}}{\hfill$\square$}
\title{On the Sampling Sparsity of Neuromorphic Analog-to-Spike Conversion based on Leaky-Integrate-and-Fire}
\author{ 
{
\hspace{1mm}Bernhard A.~Moser}\thanks{double affiliation: Software Competence Center Hagenberg (SCCH), 4232 Hagenberg, Austria} \\
	Institute of Signal Processing \\
	Johannes Kepler University of Linz\\
	\texttt{bernhard.moser@\{scch.at,jku.at\}} 
	\And
	{\hspace{1mm}Michael Lunglmayr} \\
	Institute of Signal Processing\\
	Johannes Kepler University of Linz, Austria\\
	\texttt{michael.lunglmayr@jku.at} 
	}
\begin{document}
\maketitle

\begin{abstract}
In contrast to the traditional principle of periodic sensing neuromorphic engineering 
pursues a paradigm shift towards bio-inspired event-based sensing, where events are primarily 
triggered by a change in the perceived stimulus. 
We show in a rigorous mathematical way that information encoding by means of Threshold-Based Representation  
based on either Leaky Integrate-and-Fire (LIF) or Send-on-Delta (SOD) is linked to an analog-to-spike conversion that guarantees maximum sparsity while satisfying an approximation condition based on the Alexiewicz norm.

\end{abstract}

\keywords{Neuromorphic Engineering \and Leaky-Integrate-and-Fire (LIF) \and Send-on-Delta \and Alexiewicz Norm \and Sparsity \and Sampling}

\section{Introduction}
In biological sensing systems, the information encoding of analog signals 
is driven by event-based mechanisms~\cite{Tayarani-Najaran2021}.
The unrivaled efficiency of biological information processing has long motivated engineers to seek bio-inspired approaches to sensing and signal processing. The main principle of neuromorphic sensing is to replace the traditional approach of periodic sensing in favor of an event-driven scheme that mimics sensing in the nervous system, where events are primarily triggered by a change in the perceived stimulus.
This way of temporal information encoding is also called Threshold-Based Representation (TBR) in the literature~\cite{Forno2022}, 
for which threshold-based mechanisms based on Leaky Integrate-and-Fire (LIF) and Send-on-Delta (SOD) provide computational models. Nowadays LIF has become popular in neuromorphic computing in the context of spiking neural networks (SNNs)~\cite{bookGerstner2014,TAVANAEI2019,Nunes2022,Yamazaki2022}.
TBR is applied successfully in neuromorphic sensing and SNN applications based on time-varying signals. For example, for the Free Spoken Digit (FSD) benchmark dataset, consisting of $8$-kHz audio files, it was shown that this type of temporal contrast coding outperforms other coding variants such as rate coding or deconvolution-based approaches~\cite{Forno2022}. Other applications of TBR can be found in neuromorphic visual processing applications~\cite{Yousefzadeh2022}.

In this paper, we provide a mathematical framework that allows us to  show that the information encoding mechanisms of TBR based on 
LIF and SOD distinguish by maximal sparsity while meeting certain approximation conditions. An important contribution of this paper is the identification of these approximation conditions, which is closely related to the Alexiewicz norm, a norm rarely used in mathematics and engineering so far~\cite{Alexiewicz1948, MOSER2024128190}.

The paper is outlined as follows. 
In Section~\ref{s:LIF} we fix  notation and provide a mathematical definition of the LIF model in continuous time, 
comprising the parameters for the refractory time, the leakage and the threshold. 
As a starter, in Section~\ref{s:IF} we motivate our analysis by considering the special case of zero leakage, i.e. integrate-and-fire (IF),
which allows a mathematical representation $\mbox{IF} = A^{-1} \circ q \circ A$ in terms of a transformed version of the standard element-wise quantization 
operation $q$ by rounding to the next integer closest to zero. 
This decomposition theorem reveals the related metric structure of the underlying signal space in a direct way, which turns out to be the Alexiewicz norm~\cite{Alexiewicz1948, MOSER2024128190}. 
Since spike trains are discrete sequences of spikes, their number, respectively, the sum over all spikes and therefore its $L_1$-norm is a natural measure 
for the sparseness of a spike train. In this way, in Section~\ref{ss:zero} we 
investigate sparseness, and, in a first step, prove an extremal sparseness property for IF by exploiting  the outlined decomposition theorem of Section~\ref{s:IF}. 
In Section~\ref{s:LIF} we consider the general situation of arbitrary leakage. 
The fact that the decomposition property of IF fails in the general case is an indication that the extremal sparseness property does not generally apply. However, experimental evaluations show that LIF with any  non-zero leakage parameter has a very high probability of exhibiting extremal sparseness. 

\section{Mathematical Preliminaries}
\label{s:LIF}
In the biological nervous system, information processing relies on event-based triggering mechanisms.
Due to the principle of Temporal Encoding and its realization by Threshold-Based Representation (TBR), for sensing purposes a simple idealized mathematical model of such a mechanism can be realized by means of triggering (or firing) a polar pulse (i.e., spike with either positive or negative sign) when a certain threshold value for the signal's change is exceeded. In case the change is positive, a positive spike is created, if the change is negative, 
a negative spike is created~\cite{Forno2022}. Depending on how {\it changes} are modeled we obtain variants of thresholding-based representations. For SOD, the signal's change is measured by the absolute difference 
$|f(t_{k+1}) -  f(t_k)|$ between the signal values $f(t_k)$ at the time $t_k$ of the last event and 
the current value $f(t_{k+1})$. Positive, resp. negative Dirac impulses, also referred to as spikes, are triggered
if $f(t_{k+1}) -  f(t_k)\geq \vartheta$, resp. $f(t_{k+1}) -  f(t_k)\leq -\vartheta$, for some given threshold $\vartheta$.
In contrast to SOD, the LIF model employs a temporal aggregation by means of a leaky integral. Later on in Section~\ref{ss:zero} we point out that SOD can be traced back to LIF by considering the derivative of the analog signal. 
Therefore let us first give a definition of LIF for generating positive or negative spikes depending on the sign of the signal's change.
For this we consider the leaky parameter $\alpha\geq 0$, the threshold $\vartheta>0$ and assume an {\it a.e. bounded} signal $f$ with locally finitely many Dirac impulses. 
LIF can be understood as a mapping that converts $f$ into a spike train $s(t) = \sum_k s_k\delta(t - t_k)$, where $s_k \in \mathbb{R}$ denotes the amplitude of the spike at time $t_k$. 
The time points $t_{k}$ are recursively given by
\begin{equation}
\label{eq:LIFsample}
t_{k+1} := \inf\left\{T\geq t_k + t_r: \left|\int_{t_k}^{T} e^{-\alpha (T -  t)} \big(f(t) + r_k \delta(t-t_k)\big) dt\right| \geq \vartheta\right\},
\end{equation}
where $t_r\geq 0$ is the refractory time and $T=t_{k+1}$ is the first time point after $t_{k}$ that causes the integral in 
(\ref{eq:LIFsample}) to violate the sub-threshold condition $|\int_{t_k}^{T} e^{-\alpha (t_{k+1} -  t)} f(t) dt | < \vartheta$.
The term $r_k \delta(t-t_k)$ refers to the reset of the membrane potential in the moment a spike has been triggered.
In the standard definition of LIF for discrete spike trains, see~\cite{bookGerstner2014,lapicque_recherches_1907}, the reset is defined as the membrane potential that results from subtracting the threshold if the membrane's potential reaches the positive threshold level $+\vartheta$, respectively adding $\vartheta$ if 
the membrane's potential reaches the negative threshold level $-\vartheta$. 
In the case of bounded $f$ the integral $g(t):= \int_{t_k}^t e^{-\alpha (t_{k+1} -  t)} f(t) dt$ is changing continuously in $t$ 
so that the threshold level in (\ref{eq:LIFsample}) is exactly hit. 
Consequently the resulting reset amounts to zero, i.e., $r_k = 0$ and the resulting amplitude $s_k$ of the triggered spike is defined accordingly, i.e., $s_k = +\vartheta$, when the positive threshold value is reached, and $s_k = -\vartheta$ when the negative threshold value is reached.
For a mathematical analysis and a discussion of how to define the reset $r_k$  in the presence of Dirac impulses see~\cite{MOSER2024128190}.

In our analysis we choose the subtraction reset mechanism in the setting of zero refractory time, i.e, $t_r=0$, which equals the {\it reset-to-mod} 
variant outlined in~\cite{MOSER2024128190}.
Now, by fixing the leakage parameter $\alpha\geq 0$ and the threshold $\vartheta>0$, 
Equ.~(\ref{eq:LIFsample}) turns a signal $f \in \mathcal{F}$ from the signal space $\mathcal{F}$ under consideration
into a sequence of spikes $s = (t_k, s_k) \in \mathbb{S}_{\vartheta}$, establishing the
mapping 
\begin{equation}
\label{eq:DefLIF} 
\mbox{LIF}_{\alpha, \vartheta}(f)= s \in \mathbb{S}_{\vartheta}, 
\end{equation}
where $\mathbb{S}_{\vartheta}$ denotes the vector space of spike trains $s$ represented in terms of a
sequence $(t_k, s_k)_k$ or by means of a sum of Dirac pulses, $s(t) = \sum_k s_k \, \delta(t - t_k)$, with
spike amplitudes being threshold multiples, $s_k \in \vartheta\,\mathbb{Z}$, and 
the time points $(t_k)_k$ being at most finitely many in any compact time interval, see~\cite{MoserLunglmayr_AIROV2024}. 
As pointed out in~\cite{MoserLunglmayr_AIROV2024}, Equ. (\ref{eq:DefLIF}) is well-defined for any locally integrable and almost everywhere bounded $f$ with locally finitely many Dirac impulses.
The special case of zero leakage, $\alpha=0$, is also referred to as integrate-and-fire (IF), i.e., $\mbox{IF}_{\vartheta}(f) = \mbox{LIF}_{0, \vartheta}(f)$.

The assumption of zero refractory time, $t_r=0$, turns the recursive scheme~(\ref{eq:LIFsample}) to the following recursion:
\begin{eqnarray}
\label{eq:DefLIFa}
z_{t_1} & := & F_{t_1}, 																	\nonumber \\
s_{t_1} & := & q_{\vartheta}(z_{t_1}), 															\nonumber \\
z_{t_{k+1}} & := & F_{t_{k+1}} + e^{-\alpha\, (t_{k+1} - t_k)}\, (z_{t_k} - s_{t_k}),		\nonumber \\
s_{t_{k+1}} & := & q_{\vartheta}(z_{t_{k+1}}),
\end{eqnarray}
where $F_{t_{k+1}}:= \int_{(t_k, t_{k+1}]} e^{-\alpha\, (t_{k+1} -t)} f(t) dt$ is the evaluation of the integral immediately after 
the spike triggering event at $t = t_k$.  
$q_{\vartheta}(z):= \mbox{sgn}(z)\, \vartheta\, \max\{n \in \mathbb{N}_0: \vartheta\, n \leq |z|\}$ is the standard quantization by truncation, see Fig.~\ref{fig:q}, 
where $\mbox{sgn}(z) = 1$ if $z>0$, $-1$ if $z<0$ and $0$ else.
$t_{k+1}$ is found to be the first time $T> t_k$ for which $|F_T + e^{-\alpha\, (T - t_k)}\, (z_{k} - s_{k})| \geq \vartheta$.
\begin{figure}
	\centering
	\includegraphics[width=0.3\linewidth]{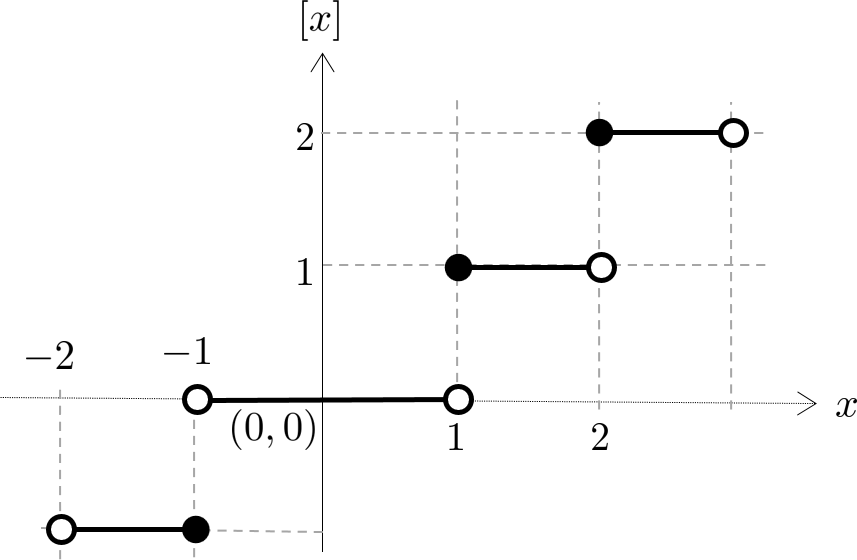}
	\caption{Illustration of standard quantization function $q(z):=q_{\vartheta}(z)$ for $\vartheta = 1$}
	\label{fig:q}
\end{figure}

In the context of artificial spiking neural networks (SNNs)  continuous time is often replaced by discrete time at $t_k := k\,\Delta$ and the signal 
$f = f(t)$ becomes a sequence $f_k := f(t_k)\, 1_{(k-1, k]}$. 
This way, (\ref{eq:DefLIFa}) simplifies to the recursion ($k = 1,2, \ldots$):
\begin{eqnarray}
\label{eq:DefLIFd}
z_1 & := & f_1, 																	\nonumber \\
s_1 & := & q_{\vartheta}(z_1), 															\nonumber \\
z_{k+1} & := & f_{k+1} + \beta\, (z_k - s_k),		\nonumber \\
s_{k+1} & := & q_{\vartheta}(z_{k+1}),
\end{eqnarray}
where $\beta := e^{-\alpha} \in [0,1]$.
 
To unify notation we write $\mathcal{F}$ for the function space under consideration. In the discrete case $\mathcal{F}$ is the set of all sequences 
$f = (f_1, f_2, \ldots)$, $f_k \in \mathbb{R}$, and in the continuous-time case, $\mathcal{F}$ denotes the space of bounded integrable functions 
$f: [0, \infty) \rightarrow \mathbb{R}$ superimposed at most by finitely many Dirac impulses.

\section{Spike-Train Quantization in the Alexiewicz Norm}
\label{s:IF}
In~\cite{MOSER2024128190} we show that LIF shows characteristics of a quantization operator in terms of 
an idempotent projection, $\mbox{LIF}_{\alpha, \vartheta}(f) = \mbox{LIF}_{\alpha, \vartheta}(\mbox{LIF}_{\alpha, \vartheta}(f)) = \mbox{LIF}_{\alpha, \vartheta}(f)$, and a quantization error bound given by 
\begin{equation}
\label{eq:LIFQuant}
\|\mbox{LIF}_{\alpha, \vartheta}(f) - f\|_{A, \alpha} < \vartheta, 
\end{equation}
where $\|.\|_{A, \alpha}$ is the weighted Alexiewicz norm $\|f\|_{A, \alpha} = \sup_T|\int_0^T e^{-\alpha (T-t)} f(t) dt|$, and 
$f$ is locally integrable and bounded except a finite number of Dirac impulses~\cite{MoserLunglmayr_AIROV2024}.
For the discrete case of equidistant unit time steps  we get $\|f\|_{A, \alpha} = \max_n |\sum_{k = 0}^n  \beta^{n-k}  f_k|$, where $f(t)= \sum_k f_k \delta(t-k)$ and $\beta := e^{-\alpha}$. 

The open Alexiewicz ball $\mathring{B}_{\alpha, \vartheta}(f)$ for some $\alpha\in [0, \infty]$ with radius $\vartheta>0$ centered at $f$ is given by
$\mathring{B}_{\alpha, \vartheta}(f) := \{g :\, \|f - g\|_{A, \alpha} < \vartheta\}$. Note that for $f\equiv 0$ the open Alexiewicz ball is just the set of all 
sub-threshold functions $f$ for which never a spike is triggered.

\subsection{Geometry of the Alexiewicz Ball}
\label{s:AlexBall}
In our mathematical framework the Alexiewicz norm is key, so let us study its induced geometry.
As illustrated in Fig.~\ref{fig:AlexiewiczBall} in contrast to standard norms the unit ball induced by the Alexiewicz norm is less symmetric.
\begin{figure}[ht]
	\centering
	\includegraphics[width=0.3\linewidth]{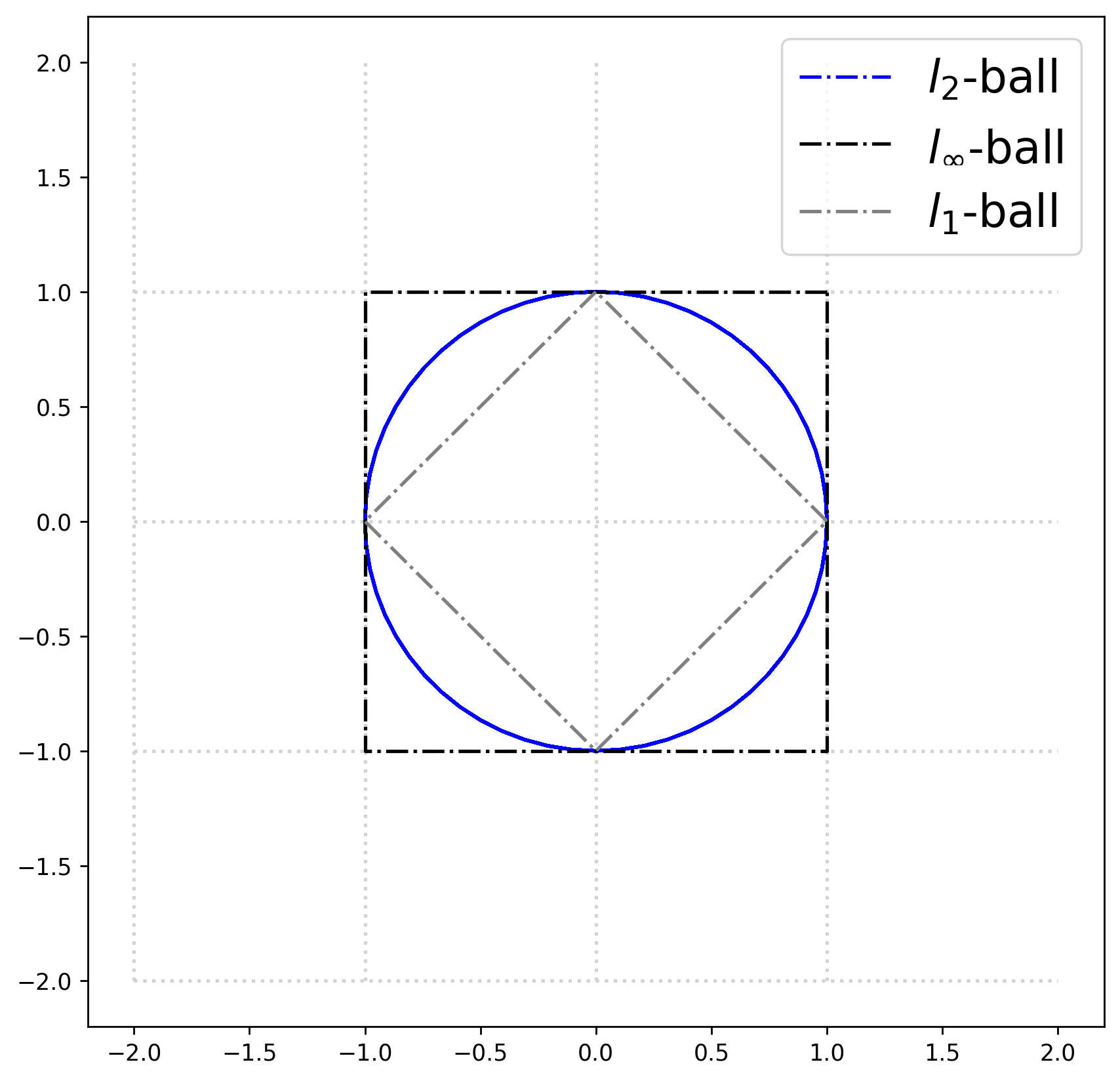}
	\includegraphics[width=0.3\linewidth]{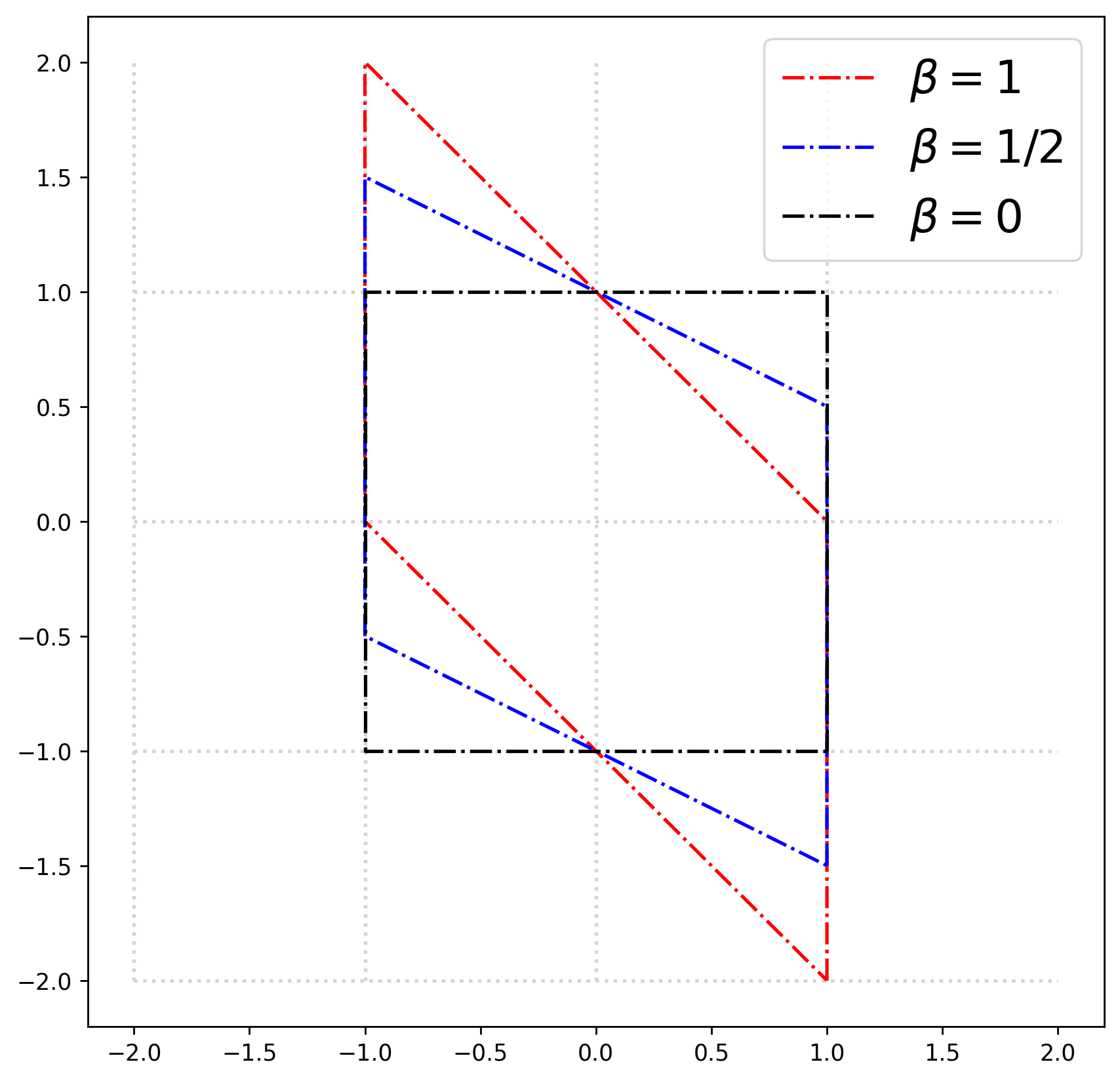}
		\caption{In contrast to the standard $l_p$ norms (left),  for the Alexiewicz nom $\|.\|_{A, \alpha}$, $\beta = e^{-\alpha}$, time ordering matters 
		resulting in a less symmetric geometry of its unit ball (right). 
		}
		\label{fig:AlexiewiczBall}
\end{figure}
This reflects the fact that for the  Alexiewicz norm time-ordering matters. Consider for example
$\|(1, -1, 1)\|_{A, \alpha} = 1 \neq 5/4 = \|(-1, 1, 1)\|_{A, \alpha}$ for $\alpha = \ln(2)$, i.e., $\beta = e^{-\alpha} = 1/2$.

The  Alexiewicz ball can be represented by means of a sheared transformation of the maximum norm ball
\begin{lemma}
\label{lem:ball}
\begin{equation}
\label{eq:ball}
\mathring{B}_{\alpha, \vartheta}(f) = f + A_{\alpha}^{-1}\circ \mathring{B}_{\infty, \vartheta}(0),
\end{equation}
where  
\begin{equation}
\label{eq:Acont}
\hat{f}(t):= A_{\alpha}(f)(t):= \int_0^t e^{-\alpha\, (t-\tau)} f(\tau) d\tau, 
\end{equation}
for which the inverse is found (by using the product rule) to be $f(t) = A_{\alpha}^{-1}(\hat{f}):=  \alpha\, \hat{f}(t) + \frac{d}{dt}\hat{f}(t)$,
where  $\frac{d}{dt}\hat{f}(t)$ denotes the distributional derivative of $\hat{f}$.
\end{lemma}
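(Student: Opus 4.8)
The plan is to recognise the weighted Alexiewicz norm as nothing more than the ordinary sup-norm pulled back through the linear leaky-integration operator $A_{\alpha}$, after which the ball identity~(\ref{eq:ball}) reduces to a one-line change of variables. The first step is to observe that the quantity inside the supremum defining $\|\cdot\|_{A,\alpha}$ is precisely $A_{\alpha}$ evaluated at $T$: comparing the definition of $\|\cdot\|_{A,\alpha}$ with~(\ref{eq:Acont}) gives $\|h\|_{A,\alpha} = \sup_T |A_{\alpha}(h)(T)| = \|A_{\alpha}(h)\|_{\infty}$, where $\|\cdot\|_{\infty}$ denotes the sup-norm (the max-norm in the discrete case, equal to $\|\cdot\|_{A,\infty}$). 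In other words $A_{\alpha}$ is an isometry from $(\mathcal{F}, \|\cdot\|_{A,\alpha})$ onto its image equipped with $\|\cdot\|_{\infty}$, which is exactly why the Alexiewicz ball is a sheared copy of the max-norm ball.

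The second step is to confirm that $A_{\alpha}$ is a linear bijection on $\mathcal{F}$ with the inverse stated in the lemma. Linearity is immediate from~(\ref{eq:Acont}). For invertibility I would write $\hat{f}(t) = e^{-\alpha t}\int_0^t e^{\alpha\tau} f(\tau)\,d\tau$ and differentiate by the product rule, obtaining $\frac{d}{dt}\hat{f}(t) = -\alpha\,\hat{f}(t) + f(t)$, hence $f = \alpha\,\hat{f} + \frac{d}{dt}\hat{f} = A_{\alpha}^{-1}(\hat{f})$; together with the normalisation $\hat{f}(0)=0$ this yields $A_{\alpha}^{-1}\circ A_{\alpha} = \mathrm{id}$ and, reversing the computation, $A_{\alpha}\circ A_{\alpha}^{-1} = \mathrm{id}$ on the corresponding space of primitives. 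In the discrete setting the same fact is the triangular inversion $\hat{f}_n \mapsto \hat{f}_n - \beta\,\hat{f}_{n-1}$ of the leaky cumulative sum.

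With these two facts the identity follows by chaining equivalences. For arbitrary $g$, using linearity of $A_{\alpha}$ and the symmetry $\|f-g\|_{A,\alpha} = \|g-f\|_{A,\alpha} = \|A_{\alpha}(g-f)\|_{\infty}$,
\begin{eqnarray}
g \in \mathring{B}_{\alpha,\vartheta}(f)
 & \iff & \|f-g\|_{A,\alpha} < \vartheta \nonumber\\
 & \iff & \|A_{\alpha}(g-f)\|_{\infty} < \vartheta \nonumber\\
 & \iff & A_{\alpha}(g-f) \in \mathring{B}_{\infty,\vartheta}(0). \nonumber
\end{eqnarray}
Applying the bijection $A_{\alpha}^{-1}$ to the last membership turns it into $g-f \in A_{\alpha}^{-1}\circ\mathring{B}_{\infty,\vartheta}(0)$, i.e.\ $g \in f + A_{\alpha}^{-1}\circ\mathring{B}_{\infty,\vartheta}(0)$, which is exactly~(\ref{eq:ball}).

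The algebra above is negligible; the only genuine care lies in the function-space bookkeeping, which is where I expect the main obstacle. Because $\mathcal{F}$ admits finitely many Dirac impulses, $A_{\alpha}$ sends each such impulse to a jump discontinuity of $\hat{f}$, so the inverse must be read distributionally, with the $\frac{d}{dt}\hat{f}$ term regenerating the Diracs; moreover the normalisation $\hat{f}(0)=0$ is needed to pin down a true bijection rather than an inverse determined only up to a homogeneous solution of $\alpha\hat{f} + \hat{f}' = 0$. Verifying that $A_{\alpha}$ maps $\mathcal{F}$ bijectively onto precisely this class of primitives is what upgrades a mere inclusion to the set equality claimed, and is the step deserving the most attention.
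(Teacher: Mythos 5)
Your proof is correct and follows essentially the same route as the paper's: both identify $\|h\|_{A,\alpha}=\|A_{\alpha}h\|_{\infty}$ and then perform the change of variables $\widetilde{h}=A_{\alpha}h$ to rewrite the ball. Your additional care about the invertibility of $A_{\alpha}$ (the normalisation $\hat{f}(0)=0$ and the distributional reading of the derivative) is a welcome refinement of a step the paper treats as given, but it does not change the argument.
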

\begin{proof}
$\mathring{B}_{\alpha, \vartheta}(f)  =  \{f + h: \|h\|_{A, \alpha} < \vartheta\} 
																		  =  \{f + h: \|A_{\alpha}\, h\|_{\infty} < \vartheta\} 
																		  =  \{f + A_{\alpha}^{-1}\, \widetilde{h}: \|\widetilde{h}\|_{\infty} < \vartheta\}
																		 = f + A_{\alpha}^{-1}\circ \mathring{B}_{\infty, \vartheta}(0)$.
\end{proof}

For examples of Alexiewicz balls with $2$ time points see Fig.~\ref{fig:2DLIF}.
\begin{figure}[ht]
	\centering
	\includegraphics[width=0.5\linewidth]{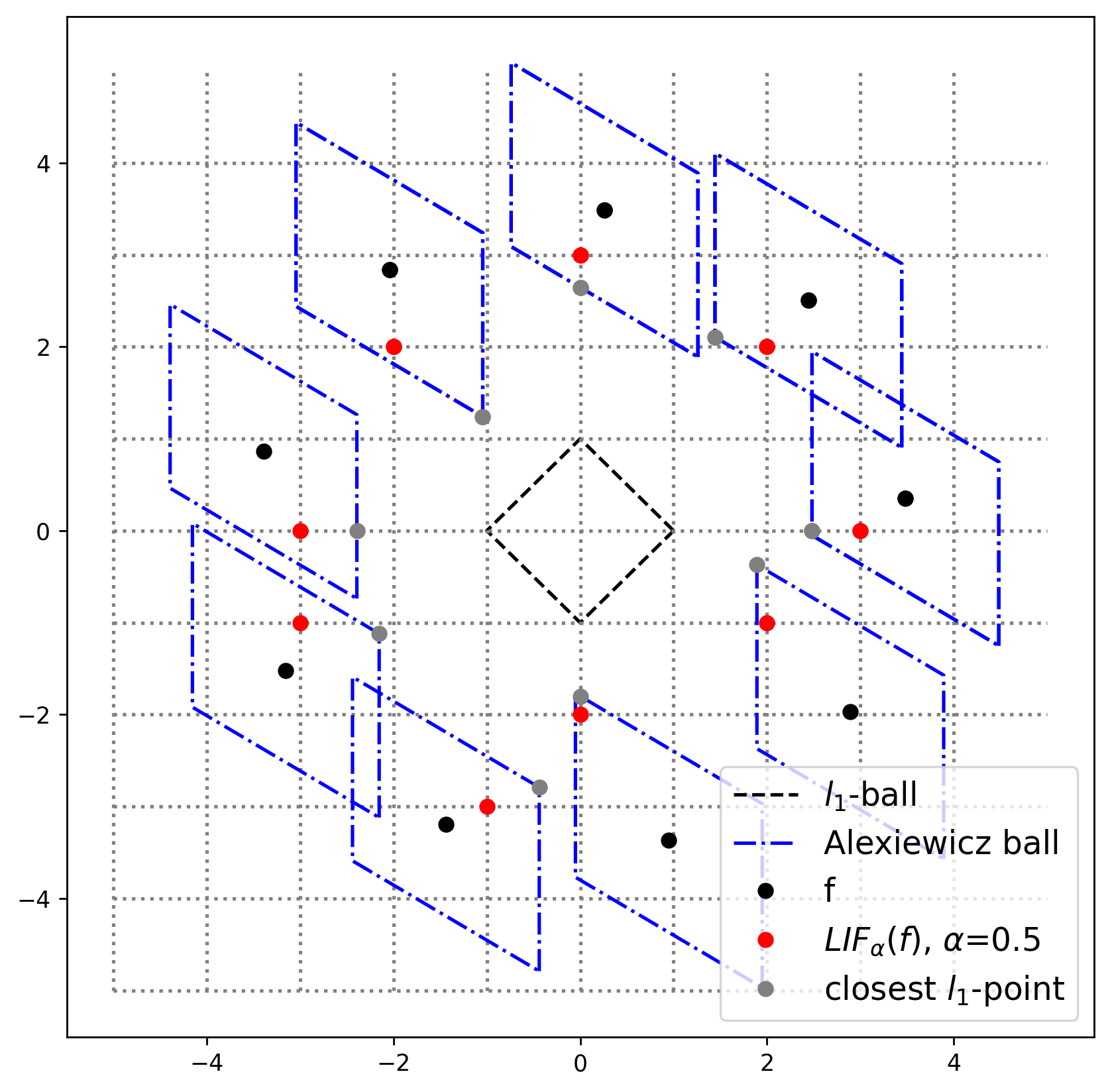}
		\caption{Illustration of spike trains $s = s_1\, \delta(t-t_1) + s_2\, \delta(t-t_2)$ consisting of just $2$ spikes  $(s_1, s_2)$, represented as 2D points.
		$l_1$-ball in the center (black dashed line) and 
				2D unit Alexiewicz balls (blue) centered at input signals $f$ (black points), with closest $l_1$-points (gray points) and LIF spike train (red points).
				In this setting LIF satisfies the extremal sparsity property in the sense that the red points are those grid points within the Alexiewics balls that have
				minimal $l_1$-norm.
		}
			\label{fig:2DLIF}
\end{figure}

\subsection{Sparsity Bounds for Leaky Integrate-and-Fire}
\label{ss:l1Distance}
For a spike train $s = \sum_k s_k\, \delta(t -  t_k)$,  the notion of sparsity is equivalent with its $l_1$-norm.
Strictly speaking, sparsity would be defined via the number of non-zero elements. 
As this number does not resemble a norm, the $l_1$-norm norm is used instead as it is well established in literature that this norm fosters sparsity~\cite{Ramirez2013}.
We are interested in lower and upper bounds on the sparsity of the spike train $s$ obtained by leaky integrate-and-fire applied on some $f$.
As a first observation, due to~(\ref{eq:LIFQuant}),  we get a lower bound by taking into account the fact that LIF maps $f$ into the Alexiewicz ball centered at $f$, 
providing $\|B_{\alpha, \vartheta}(f)\|_1 := \inf\{\|g\|_1:  \|f - g\|_{A, \alpha} < \vartheta\} \leq \|\mbox{LIF}_{\alpha, \vartheta}(f)\|_1$. 
On the other hand, LIF is sparse in the sense that the $l_1$-norm of the resulting spike train is below that of $f$. 
\begin{theorem}
\label{prop:l1inequ}
For any $f$  that is  integrable and bounded except a finite number of Dirac impulses
leaky integrate-and-fire satisfies the following sparsity bounds
\begin{equation}
\label{eq:l1inequ}
 \|B_{\alpha, \vartheta}(f)\|_1 \leq \|\mbox{LIF}_{\alpha, \vartheta}(f)\|_1 \leq  \|f\|_1.
\end{equation}
\end{theorem}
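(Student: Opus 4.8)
The lower bound I would simply read off from the definitions: the quantization estimate (\ref{eq:LIFQuant}) states exactly that $\mbox{LIF}_{\alpha, \vartheta}(f) \in \mathring{B}_{\alpha, \vartheta}(f)$, so $\mbox{LIF}_{\alpha, \vartheta}(f)$ is an admissible competitor in the infimum that defines $\|B_{\alpha, \vartheta}(f)\|_1$, giving $\|B_{\alpha, \vartheta}(f)\|_1 \le \|\mbox{LIF}_{\alpha, \vartheta}(f)\|_1$ at once. The genuine content is therefore the upper bound $\|\mbox{LIF}_{\alpha, \vartheta}(f)\|_1 \le \|f\|_1$. My plan is to prove it directly from the recursions (\ref{eq:DefLIFa}) and (\ref{eq:DefLIFd}) by a telescoping argument powered by two structural facts: the truncation quantizer $q_\vartheta$ rounds towards zero, and the leakage factor never exceeds one.

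First I would record the sign-preservation observation. For every $z$, both the spike value $s = q_\vartheta(z)$ and the post-spike residual $u := z - q_\vartheta(z)$ carry the sign of $z$, so no cancellation occurs and
\[
|z| = |s| + |u|, \qquad |u| < \vartheta .
\]
Writing $u_k := z_{t_k} - s_{t_k}$ and setting $u_0 := 0$, in the discrete case (\ref{eq:DefLIFd}) one has $z_{k+1} = f_{k+1} + \beta u_k$ with $\beta = e^{-\alpha} \in [0,1]$, so the observation together with the triangle inequality yields
\[
|s_{k+1}| = |z_{k+1}| - |u_{k+1}| \le |f_{k+1}| + \beta |u_k| - |u_{k+1}| \le |f_{k+1}| + |u_k| - |u_{k+1}| .
\]
Summing over $k = 0, \dots, N-1$ the residual terms telescope; since $u_0 = 0$ and $|u_N| \ge 0$ this leaves $\sum_{k=1}^{N} |s_k| \le \sum_{k=1}^{N} |f_k|$, and letting $N \to \infty$ settles the discrete setting.

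The continuous case I would run along identical lines, with the sum replaced by the integral kernel. Here $z_{t_{k+1}} = F_{t_{k+1}} + e^{-\alpha(t_{k+1}-t_k)} u_k$, and because $\alpha \ge 0$ and $t_{k+1} > t_k$ both the reset factor $e^{-\alpha(t_{k+1}-t_k)} \le 1$ and the kernel $e^{-\alpha(t_{k+1}-t)} \le 1$ on $(t_k, t_{k+1}]$. Hence
\[
|F_{t_{k+1}}| \le \int_{(t_k, t_{k+1}]} e^{-\alpha(t_{k+1}-t)} |f(t)|\, dt \le \int_{(t_k, t_{k+1}]} |f(t)|\, dt ,
\]
and the same telescoping (with $u_0 = 0$ at $t_0 = 0$) gives $\sum_{k=1}^{N} |s_{t_k}| \le \int_{(0, t_N]} |f(t)|\, dt \le \|f\|_1$, after which the monotone passage $N \to \infty$ concludes.

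I expect the main obstacle to be the sign-preservation identity $|z| = |s| + |u|$ and its interaction with the Dirac impulses. When $f$ carries a Dirac, $F_{t_{k+1}}$ can overshoot the threshold, so $s_{t_{k+1}}$ becomes a larger multiple of $\vartheta$ and the residual $u_{k+1}$ is genuinely nonzero; the point to verify carefully is that truncation-towards-zero still forces $s_{t_{k+1}}$ and $u_{k+1}$ to share the sign of $z_{t_{k+1}}$, so that no magnitude is lost to cancellation and the per-step inequality survives. The remaining items, namely the boundary bookkeeping ($u_0 = 0$ and the favorable sign of the leftover $-|u_N|$) and the monotone limit, are routine once that inequality is in place.
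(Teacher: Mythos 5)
Your proposal is correct and follows essentially the same route as the paper: the lower bound is read off from the quantization bound (\ref{eq:LIFQuant}) exactly as in the text, and the upper bound uses the same identity $|z| = |q_\vartheta(z)| + |z - q_\vartheta(z)|$ (which holds precisely because truncation preserves sign, as you note) together with the triangle inequality, $\beta \le 1$, and the telescoping of the residual terms, in both the discrete recursion and the partition-refined continuous case of Appendix A. The only cosmetic difference is that you absorb the leakage factor into the positive residual term ($\beta|u_k| \le |u_k|$) while the paper weakens the negative one ($-|u_k| \le -\beta|u_k|$); both telescope identically.
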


\begin{proof}
Here we show the proof for the upper bound in the discrete case, $f = (f_k)_k$. 
By applying~(\ref{eq:DefLIFa}) the proof for the time-continuous case can be handled in an analogous way, see Appendix A.

Note that for any $x$ we have $|x| = |q(x)| + |x - q(x)|$. 
This observation allows us to restructure the $l_1$-norm of the spike train obtained by $\mbox{LIF}_{\alpha, \vartheta}(f)$ into the following telescope sum.
Sticking to the notation of (\ref{eq:DefLIFd}) we obtain
\noindent
\begin{eqnarray}
&  \|\mbox{LIF}_{\alpha, \vartheta}(f)\|_1 \nonumber \\
& =   \sum_k |s_k| \nonumber \\
& =  \sum_k |q_{\vartheta}(f_k + \beta\, (z_{k-1} - s_{k-1}))| \nonumber \\
						 & =  \sum_k |f_k + \beta\, (z_{k-1} - s_{k-1})|   \nonumber \\
						 & \,\,\,\,\,\,\,\, - |f_k + \beta\, (z_{k-1} - s_{k-1}) - q_{\vartheta}(f_k + \beta\, (z_{k-1} - s_{k-1}))| \nonumber \\
						 & \leq  \sum_k |f_k| + \beta\, |z_{k-1} - s_{k-1}| -  \beta\,|\underbrace{f_k + \beta\, (z_{k-1} - s_{k-1})}_{z_{k}} - \underbrace{q_{\vartheta}(z_k)}_{s_k}|
						\nonumber \\
						 & \leq  \sum_k |f_k| = \|f\|_1. \nonumber	
\end{eqnarray}
\end{proof}

As outlined in the Appendix B, the lower bound in (\ref{eq:l1inequ}) can be computed by solving a constrained $l_1$-minimization problem which in this case can be resolved efficiently by a simple recursion. Fig.~\ref{fig:ratio1}-\ref{fig:ratio2} show how the sparsity of LIF is distributed between these bounds.
If there are no (or only weak) constraints on $f$ there is less difference between the distributions for different leakage parameters, and 
the mean sparsity of LIF tends to the arithmetic mean of the bounds. 
\begin{figure}[ht]
	\centering
	\includegraphics[width=0.45\linewidth]{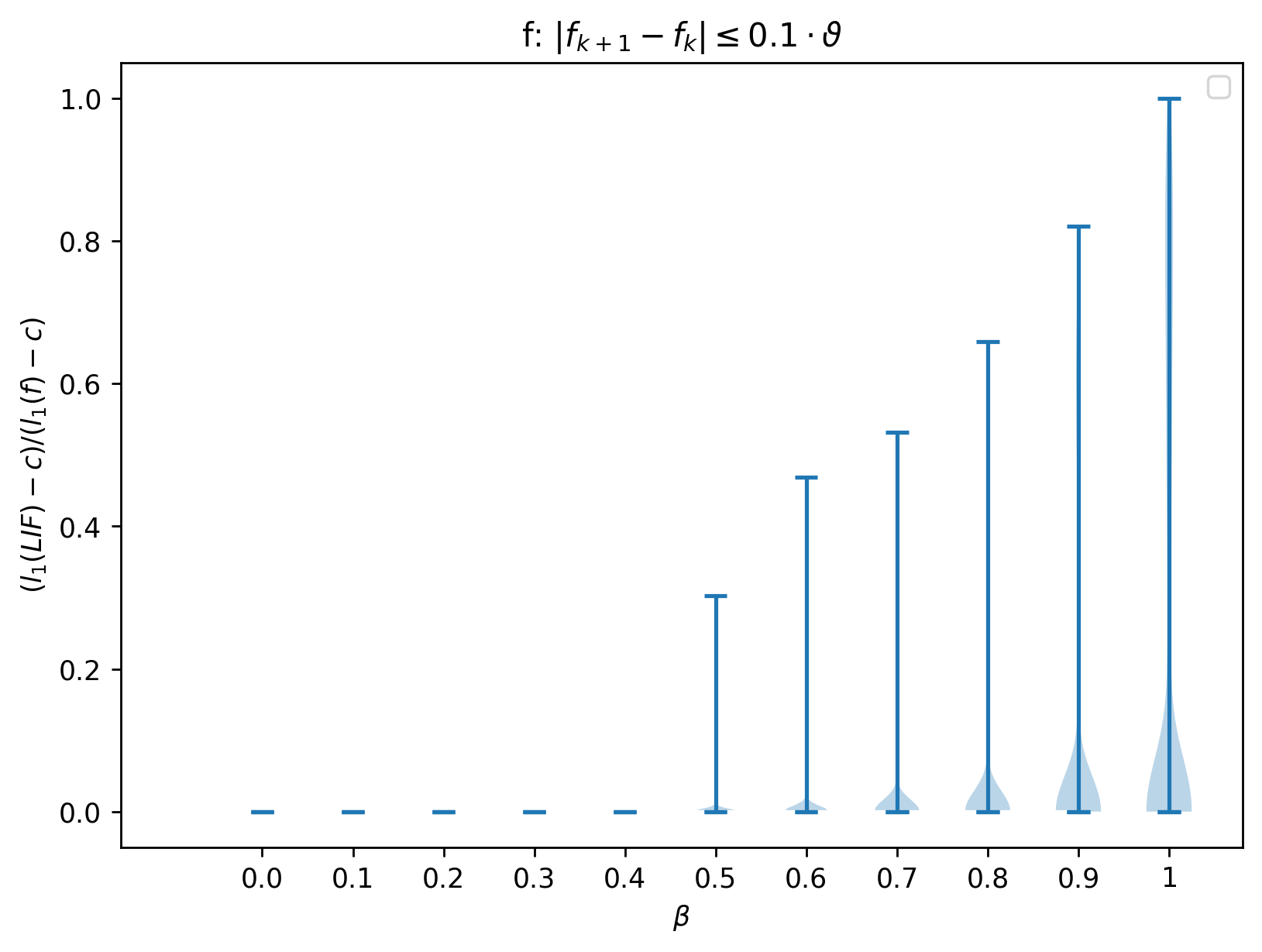}
	\includegraphics[width=0.45\linewidth]{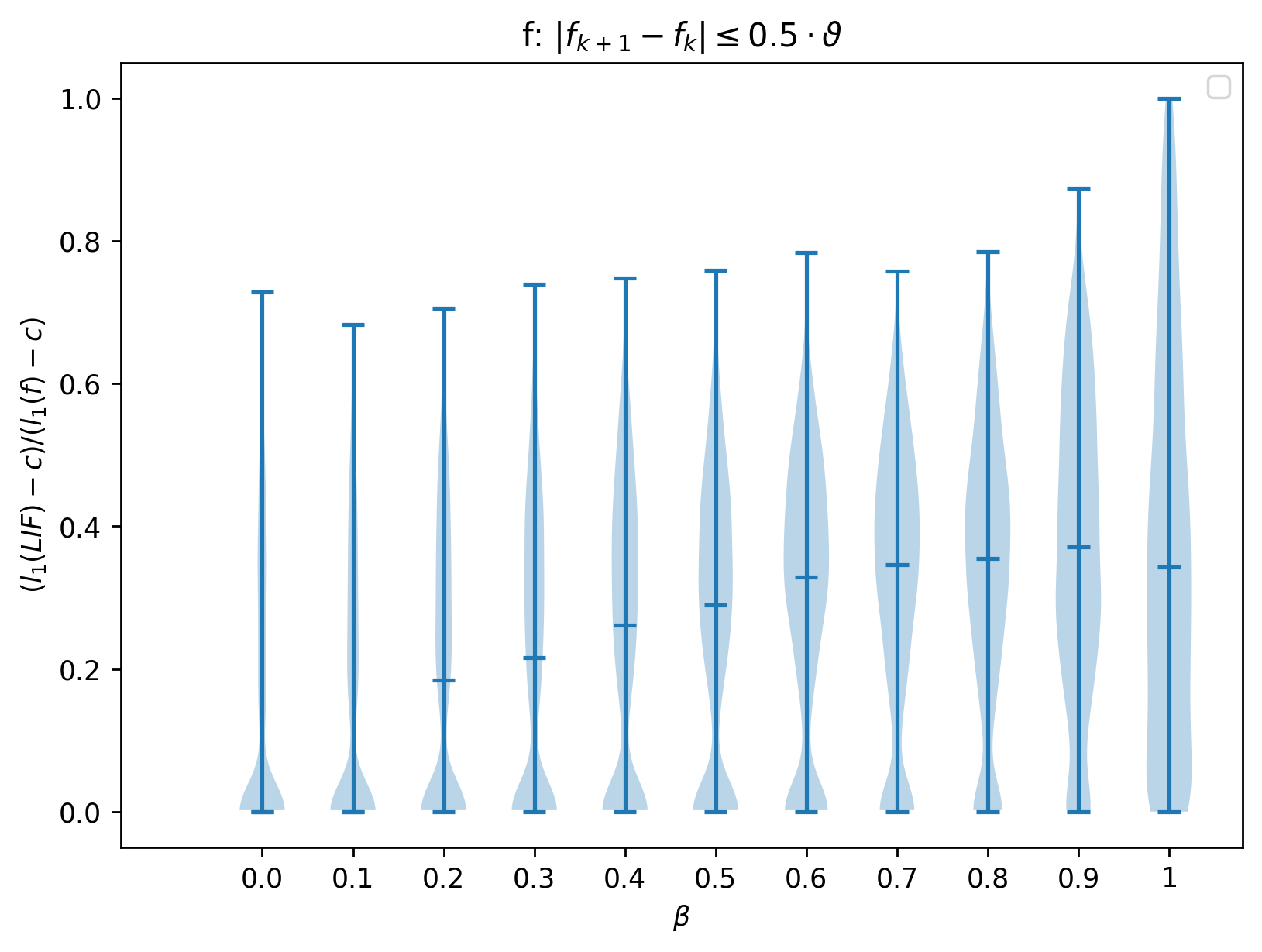}
		\caption{Distribution of $\lambda = (\|s\|_1 - c)/(\|f\|_1 - c)$ for $f$ satisfying the condition indicated in the title, 
		where $s = \mbox{LIF}_{\alpha, \vartheta}(f)$ and $c = \|B_{\alpha, \vartheta}(f)\|_1$.}
			\label{fig:ratio1}
\end{figure}
\begin{figure}[ht]
	\centering
	\includegraphics[width=0.45\linewidth]{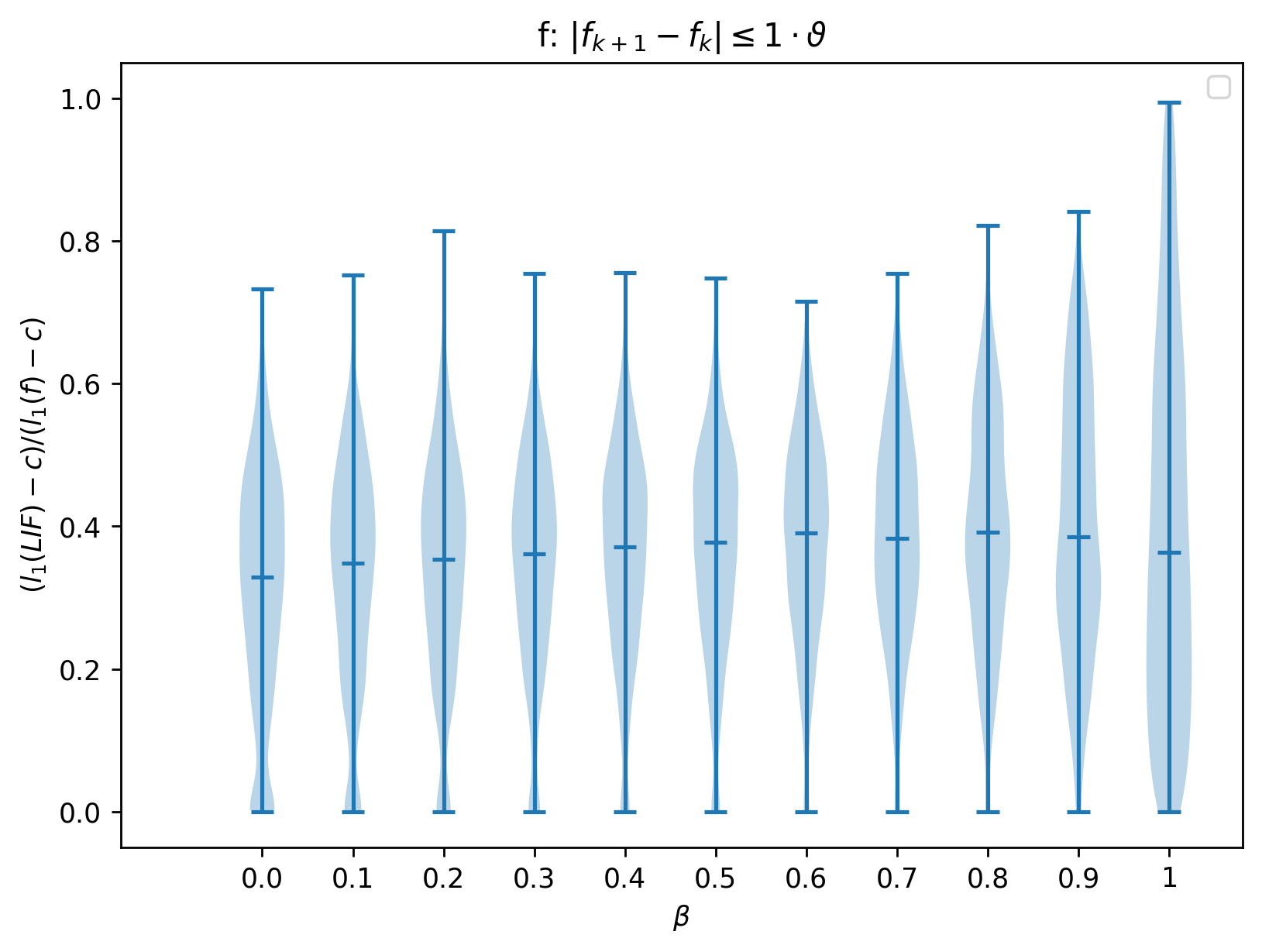}
	\includegraphics[width=0.45\linewidth]{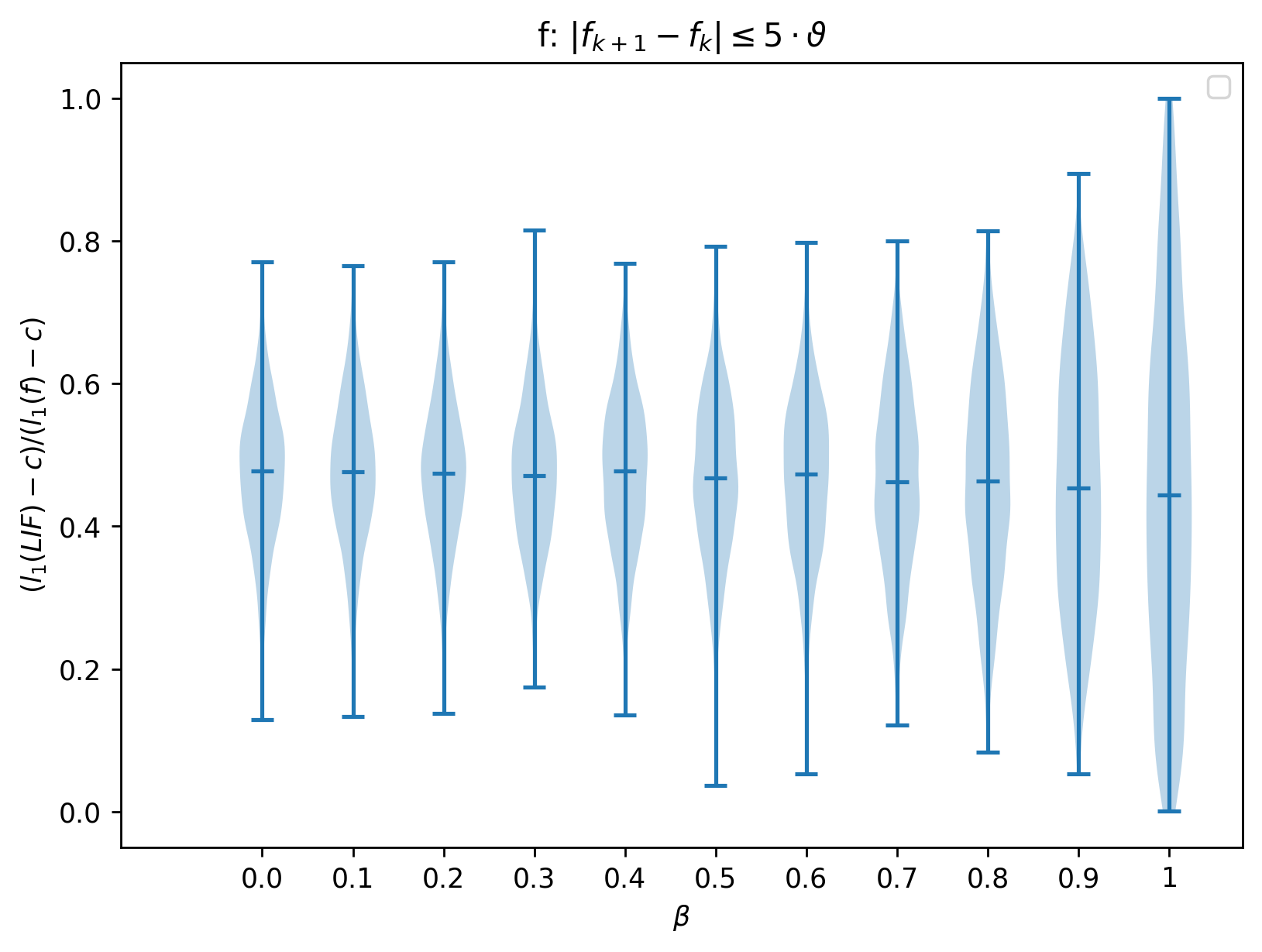}
		\caption{Like Fig.~\ref{fig:ratio1} for $f$ satisfying the condition indicated in the title. 
		}
			\label{fig:ratio2}
\end{figure}

Summing up, we find that LIF can be understood as quantization operator  
$Q_{\alpha, \vartheta}: \mathcal{F} \rightarrow \mathbb{S}_{\vartheta}$ that satisfies the following conditions:
\begin{itemize}
\item[(i)] {\bf Causality Condition.} $Q_{\alpha, \vartheta}$ is {\it causal}, i.e., 
							\begin{equation}
							\label{eq:Q1}
							\forall f, g\in \mathcal{F}: f|_{[0, T]} = g|_{[0, T]} \Rightarrow Q(f)|_{[0, T]} = Q(g)|_{[0, T]}.
							\end{equation}
\item[(ii)] {\bf Sparsity Condition.} 
							$Q_{\alpha, \vartheta}(f)$  is at least as sparse as $f$, i.e.,
							\begin{equation}
							\label{eq:Q2}
							\forall f\in \mathcal{F}: \|Q(f)\|_1 \leq \|f\|_1.
							\end{equation}
\item[(iii)] {\bf Alexiewicz Accuracy Condition.} $Q_{\alpha, \vartheta}$ satisfies the {\it Alexiewicz accuracy condition} 
							\begin{equation}
							\label{eq:Q3}
							\forall f\in \mathcal{F}: \|f - Q(f) \|_{A, \alpha} < \vartheta.
							\end{equation}
\end{itemize}

While the {\it Causality Condition}~(\ref{eq:Q1}) is a necessity for inline information processing
and the {\it Sparsity Condition}~(\ref{eq:Q2}) is necessary to justify the notion of sparsity,
the {\it Alexiewicz Accuracy Condition}~(\ref{eq:Q3}) is not that intuitive as it brings into play 
a novel notion of geometry that deviates from familiar settings in signal processing, see~\cite{MOSER2024128190, Moser2017Similarity, Moser12UnitBall}.
An alternative way to express~(\ref{eq:Q3}) is to postulate that $Q(f) \in \mathring{B}^A_{\alpha,\vartheta}(f) \cap \mathbb{S}_{\vartheta}$,
where $\mathring{B}^A_{\alpha,\vartheta}(f)  = \{g \in \mathcal{F}:\, \|f - g\|_{A, \alpha} < \vartheta\}$ is the open $\alpha$-Alexiewicz ball with radius $\vartheta>0$ 
centered at $f$.

The evaluations of Fig.~\ref{fig:ratio1}-\ref{fig:ratio2} raise the question 
how LIF behaves compared to other spike train quantization schemes as studied next. 

\section{Extremal Sparsity Property of Integrate-and-Fire}
\label{ss:IFQuant}
The question arises whether $\mbox{LIF}_{\alpha, \vartheta}$ can be directly represented as a transformed version of the 
standard quantization $q_{\vartheta}$. The idea is to check whether there holds $\mbox{LIF}_{\alpha, \vartheta}(f) = A_{\alpha}^{-1} \circ q_{\vartheta} \circ A_{\alpha} (f)$.
For simplicity, in a first step we restrict ourselves to discrete time by considering sequences $s = (s_1, s_2, \ldots) \in \mathbb{R}^{\mathbb{N}}$, which, for 
sake of convenience, we extend to $s = (s_0, s_1, s_2, \ldots) \in \mathbb{R}^{\mathbb{N}_0}$ with $s_0=0$ by default.
Then the linear transformation (\ref{eq:Acont}) becomes an ordinary sum
\begin{equation}
\label{eq:A}
\hat{s}:= A_{\alpha}(s):= (\sum_{j=0}^k e^{-\alpha\, (k-j)} s_j)_k.
\end{equation}
Note that its inverse is given by $s = A_{\alpha}^{-1}(\hat{s}):= (\hat{s}_k - e^{-\alpha}  \hat{s}_{k-1})_k$, provided that the involved sums are bounded.
For sake of simplified notation, we write $A = A_{0}$, resp. $A^{-1} = A_{0}^{-1}$ in the special case $\alpha = 0$.

Now we consider the case of zero leakage $\alpha = 0$, for which we state the following decomposition theorem for integrate-and-fire with threshold ${\vartheta}>0$. 
\begin{theorem}[Quantization Representation of Integrate-and-Fire]
\label{th:IFDecomposition}
For any $s = (s_0, s_1, \ldots ) \in \mathbb{R}^{\mathbb{N}_0}$ satisfying $\sup_n |\sum_{i=0}^n s_i | < \infty$,
with $A(s):= (\sum_{j=0}^k s_j)_k$ and $A_{\alpha}^{-1}(\hat{s}):= (\hat{s}_k - \hat{s}_{k-1})_k$
integrate-and-fire (IF) can be represented as a transformed quantization according to
\begin{equation}
\label{eq:decIF}
\mbox{IF}_{\vartheta}(s) = A^{-1}\circ q_{\vartheta} \circ A (s),
\end{equation}
i.e., $A(\mbox{IF}_{\vartheta}(s)) = \sum_{i=0}^n \mbox{IF}_{\vartheta}(s)|_{i} = q_{\vartheta}(\sum_{i=0}^n s_i)$,
where $q_{\vartheta}(s_0, s_1, \ldots ) = (q_{\vartheta}(s_0), q_{\vartheta}(s_1), \ldots )$ is applied element-wise and
$(s_0, s_1, \ldots )|_{i}:= s_i$.
\end{theorem}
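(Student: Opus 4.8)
The plan is to prove the equivalent ``integrated'' form $A(\mbox{IF}_{\vartheta}(s)) = q_{\vartheta}(A(s))$ component-wise and then strip the outer $A^{-1}$ in~(\ref{eq:decIF}). This reduction is legitimate because on sequences with bounded partial sums $A^{-1}\circ A = \mathrm{id}$ (indeed $A^{-1}(A(s))_k = A(s)_k - A(s)_{k-1} = s_k$), so the two formulations are equivalent. Writing $\sigma := \mbox{IF}_{\vartheta}(s)$ for the output spike train and $(z_k)_k$ for the membrane trace produced by the zero-leakage recursion~(\ref{eq:DefLIFd}) with $\beta = 1$, the first task is to express the internal state through the two partial-sum sequences $A(s)$ and $A(\sigma)$.

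The key bookkeeping step is the telescoping identity $z_n = A(s)_n - A(\sigma)_{n-1}$ for every $n \geq 1$, where $A(\sigma)_0 := 0$. This follows by a one-line induction: at $n=1$ we have $z_1 = s_1 = A(s)_1$ while $A(\sigma)_0 = 0$, and the reset-by-subtraction term in $z_{k+1} = s_{k+1} + (z_k - \sigma_k)$ turns the increment $s_{k+1}$ into $A(s)_{k+1} - A(\sigma)_k$. Conceptually this records that, with zero leakage, the membrane potential at step $n$ is exactly the accumulated input minus everything already spiked out, which is the entire content of the reset-to-mod convention.

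With the telescoping identity in hand I would induct on $n$ to establish $A(\sigma)_n = q_{\vartheta}(A(s)_n)$. Assuming this at $n-1$, so that $A(\sigma)_{n-1} = q_{\vartheta}(A(s)_{n-1}) = m\vartheta$ for some $m \in \mathbb{Z}$, the firing rule $\sigma_n = q_{\vartheta}(z_n)$ combined with the telescoping identity yields $A(\sigma)_n = m\vartheta + q_{\vartheta}(A(s)_n - m\vartheta)$. The induction then closes provided the quantizer commutes with shifts by integer multiples of the threshold, i.e.\ $q_{\vartheta}(w + m\vartheta) = q_{\vartheta}(w) + m\vartheta$ for $m \in \mathbb{Z}$, which I would isolate as the single structural lemma the whole argument rests on.

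I expect this shift-equivariance of $q_{\vartheta}$, and not the bookkeeping, to be the real obstacle. Since $q_{\vartheta}$ truncates toward zero (Fig.~\ref{fig:q}), the identity $q_{\vartheta}(w + m\vartheta) = q_{\vartheta}(w) + m\vartheta$ is transparent only when $w$ and $w + m\vartheta$ lie on the same side of the origin, and must be handled with care when adding the already-emitted level $m\vartheta$ moves the argument across $0$. The step I would scrutinise most is therefore the interplay between the sign convention of the truncation and the reset level: one must control the sub-threshold residual $z_n = A(s)_n - m\vartheta$ tightly enough (this is where the hypothesis $\sup_n |A(s)_n| < \infty$ and the reset-to-mod convention enter) to ensure that truncation relative to the emitted level behaves additively. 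Once the shift property is secured in the regime the accumulator actually occupies, the induction gives~(\ref{eq:decIF}), and the displayed consequence $A(\mbox{IF}_{\vartheta}(s))_n = q_{\vartheta}(\sum_{i=0}^n s_i)$ is simply this identity read off at each index $n$.
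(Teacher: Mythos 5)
Your route is the same as the paper's: express the membrane state through partial sums (your identity $z_n = A(s)_n - A(\sigma)_{n-1}$ is exactly the paper's invariant $P_n = \sum_{i\le n} s_i - q(\sum_{i\le n} s_i)$ combined with the firing rule $\sigma_n = q(z_n)$), and then reduce the whole claim to a shift-equivariance property of the quantizer; the paper invokes it in the form $q(a \pm q(b)) = q(a) \pm q(b)$. The difference is that you stop short of proving that property and explicitly flag it as the step you would scrutinise most. That caution is warranted, but as written your proposal has a genuine gap: you have reduced the theorem to a lemma that you neither prove nor correctly delimit, and the lemma is in fact false for the truncation-toward-zero quantizer of Fig.~\ref{fig:q} precisely in the sign-crossing regime you worry about. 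Take $\vartheta = 1$, $w = -1/2$, $m = 1$: then $q_{\vartheta}(w + m\vartheta) = q(1/2) = 0$ while $q_{\vartheta}(w) + m\vartheta = 0 + 1 = 1$.

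Moreover, the accumulator really does reach that regime, so the obstruction cannot be removed just by controlling the residual. For $\vartheta = 1$ and $s = (0,\, 3/2,\, -1)$ the recursion gives $z_1 = 3/2$, $\sigma_1 = 1$, then $z_2 = -1 + (3/2 - 1) = -1/2$ and $\sigma_2 = q(-1/2) = 0$, hence $A(\sigma)_2 = 1$, whereas $q_{\vartheta}(A(s)_2) = q(1/2) = 0$. So with $q_{\vartheta}$ as defined, the displayed identity $A(\mbox{IF}_{\vartheta}(s))_n = q_{\vartheta}(\sum_{i\le n} s_i)$ already fails for this two-step input; IF still lands inside the Alexiewicz ball, since $|A(s)_2 - A(\sigma)_2| = 1/2 < 1$, but on the other admissible grid level, not the one selected by truncation. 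To close your induction you must either replace $q_{\vartheta}$ by a genuinely $\vartheta\mathbb{Z}$-equivariant quantizer (e.g.\ $\vartheta \lfloor \cdot/\vartheta \rfloor$, with a matching reset convention) or restrict the inputs so that $z_n$ and $A(s)_n$ never straddle $0$. You should also be aware that the paper's own proof asserts $q(a \pm q(b)) = q(a) \pm q(b)$ as a fact without addressing this case, so your instinct has located a soft spot shared by the published argument rather than a difficulty specific to your write-up; nevertheless, as a proof your proposal is incomplete at exactly the point on which everything rests.
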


\begin{proof}
Let denote the membrane's potential at index $n$ by $P_n$.
It is built recursively by adding the next function value to the actual potential except the sum exceeds the threshold then
as many times the threshold is subtracted until the resulting difference become below threshold.
This also explains the spike triggering process and the spike amplitude fired at $i=n+1$, which is given by 
\begin{equation}
\label{eq:recIF}
\mbox{IF}(s)|_{n+1} = q(P_n +  s_{n+1}).
\end{equation}
So, for $i=0$ we get $P_0 = s_0 - q(s_0)$, and, for $i=n+1$ we get 
\begin{equation}
\label{eq:recPot}
P_{n+1} = P_{n} + s_{n+1} - q(P_{n} + s_{n+1})
\end{equation}
By induction, we get for all $i$
\begin{equation}
\label{eq:pot}
P_i = s_0 + \ldots + s_i - q(s_0 + \ldots + s_i), 
\end{equation}
which follows from the fact that $q(a \pm q(b)) = q(a) \pm q(b)$
and the observation that the induction assumption for $i=n$ applied on (\ref{eq:recPot}) yields
\begin{eqnarray}
P_{n+1} & = & s_0 + \ldots + s_n - q(s_0 + \ldots + s_n) + s_{n+1} - \nonumber \\ 
        &  &  \,\,\,q(s_0 + \ldots + s_n - q(s_0 + \ldots + s_n) + s_{n+1}) \nonumber \\       
				& = & s_0 + \ldots + s_n + s_{n+1}  - q(s_0 + \ldots + s_n + s_{n+1} ). \nonumber
\end{eqnarray}
Now, using $(\ref{eq:pot})$ to reformulate (\ref{eq:recIF}) we get 
\begin{eqnarray}
\label{eq:LIrec}
\mbox{IF}(s)|_{n+1} & = & q(P_n +  s_{n+1}) \nonumber \\
									 & = & q(s_0 + \ldots + s_n - q(s_0 + \ldots + s_n) +  s_{n+1}) \nonumber \\
									 & = & q(s_0 + \ldots + s_{n+1}) - q(s_0 + \ldots + s_n). 
\end{eqnarray}
By using~(\ref{eq:LIrec}), summing up $\mbox{IF}(s)|_{0}$ up to $\mbox{IF}(s)|_{n+1}$ results in the telescope sum
$q(s_0+s_1) - q(s_0) + q(s_0+s_1+s_2) - q(s_0+s_1) + \ldots  + q(s_0+ \ldots +s_{n+1}) - q(s_0 + \ldots s_n)$, 
which due to $q(s_0)=0$ becomes $q(s_0+ \ldots +s_{n+1}) - q(s_0) = q(s_0+ \ldots +s_{n+1})$, proving (\ref{eq:decIF}).
\end{proof}

In a similar way, this quantization decomposition property can be generalized to functions $f:[0, \infty) \rightarrow \mathbb{R}$ by means of the linear transformation
(\ref{eq:Acont}), yielding $\mbox{IF}_{\vartheta}(f)= A^{-1}\circ q_{\vartheta} \circ A (f)$. Equ. (\ref{eq:decIF}) is useful to get a geometric understanding of the quantization operator of IF, resp. later on LIF, which becomes manifest by considering
\begin{equation}
\label{eq:quantizationForm}
\|q_{\vartheta}\circ A(f) - A(f)\|_{\infty} = \|A\circ \mbox{IF}_{\vartheta}(f) - A(f)\|_{\infty} = \|\mbox{IF}_{\vartheta}(f) - f\|_{A} < \vartheta,
\end{equation}
where $\|f\|_A := \sup_T |\int_0^T f(t)dt|$ is the Alexiewicz norm~\cite{Alexiewicz1948,MOSER2024128190}. 
With Equ. (\ref{eq:quantizationForm}) we regain the quantization error bound formula, $\|\mbox{IF}_{\vartheta}(f) - f\|_{A} < \vartheta$, 
for IF in a direct way. It also shows how the Alexiewicz norm comes into play, 
which provides a different mathematical motivation than the topological argument given in~\cite{MOSER2024128190}.

\subsection{Extremal Sparsity Property of Integrate-and-Fire (IF) and Send-on-Delta (SOD)}
\label{ss:zero}
This  brings us now to the point to state the extremal sparseness property for IF in terms of the $l_1$-norm, 
$\|(s_0, s_1, \ldots)\|_1 = \sum_k |s_k|$. For the proof see the Appendix C.
\begin{theorem}[Extremal Sparsity Property of Integrate-and-Fire]
\label{th:sparsenessIF}
For any function $f:[0, T] \rightarrow \mathbb{R}$  that is integrable and almost everywhere bounded with locally finite many superimposed Dirac impulses, 
integrate-and-fire with threshold $\vartheta>0$ satisfies the maximal sparsity property
\begin{equation}
\label{eq:sparsenessProp}
\|IF_{\vartheta}(f)\|_1 = \min\{\|s\|_1:  s \in \mathring{B}^A_{\vartheta}(f) \cap \mathbb{S}_{\vartheta}  \},
\end{equation}
where $IF_{\vartheta}(f):= IF_{\alpha, \vartheta}(f)$ for zero leakage $\alpha = 0$,  $\mathring{B}^A_{\vartheta}(f):= \{g \in \mathcal{F}: \|g - f\|_A < \vartheta\} $ 
represents the open Alexiewicz unit ball with radius $\vartheta$ centered at $f$ and 
$\mathbb{S}_{\vartheta}$ is the set of all spike trains $s$ with $s(t) = \sum_k s_k \, \delta(t - t_k)$, $s_k \in \vartheta\,\mathbb{Z}$.
\end{theorem}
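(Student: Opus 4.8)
The plan is to pass to the antiderivative domain, where the Alexiewicz ball turns into a sup-norm tube and the $l_1$-norm turns into total variation, and then to show that integrate-and-fire realizes the total-variation-minimal lattice path inside that tube by a level-by-level crossing argument. Concretely, I would apply the transform $A=A_0$ of (\ref{eq:Acont}). Writing $\hat{f}:=A(f)$ and, for a spike train $s=\sum_k s_k\,\delta(\cdot-t_k)\in\mathbb{S}_{\vartheta}$, $\hat{s}:=A(s)=\sum_{t_k\le t}s_k$, three correspondences hold: $\hat{s}$ is a step function with values in $\vartheta\mathbb{Z}$ anchored at $\hat{s}(0^-)=0$; the constraint $s\in\mathring{B}^A_{\vartheta}(f)$ becomes $\|\hat{s}-\hat{f}\|_\infty<\vartheta$ (this is exactly the content of Lemma~\ref{lem:ball}, since $\|h\|_A=\|A(h)\|_\infty$); and the objective transforms isometrically as $\|s\|_1=\sum_k|s_k|=\mathrm{TV}(\hat{s})$, the total variation of $\hat{s}$. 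In the same domain, $\mbox{IF}_{\vartheta}$ is the \emph{lazy pursuit} of $\hat{f}$: its antiderivative $\hat{s}_{IF}$ maintains $|\hat{f}-\hat{s}_{IF}|<\vartheta$ and only jumps, by the smallest multiple of $\vartheta$ that restores the sub-threshold condition, i.e. to the lattice point nearest $\hat{f}$ that does not overshoot it, precisely at the instants where this would otherwise fail. Thus Theorem~\ref{th:sparsenessIF} reduces to: among all $\vartheta\mathbb{Z}$-valued functions staying in the open $\vartheta$-tube about $\hat{f}$ and anchored at $0$, the lazy pursuit has minimal total variation.

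For the reduced claim I would use a crossing decomposition of total variation. Because $\hat{s}$ is $\vartheta\mathbb{Z}$-valued, $\mathrm{TV}(\hat{s})=\vartheta\sum_{\ell\in\mathbb{Z}}c_\ell(\hat{s})$, where $c_\ell(\hat{s})$ is the number of crossings of the half-level $(\ell+1/2)\vartheta$. The feasibility constraint $|\hat{s}(t)-\hat{f}(t)|<\vartheta$ forces $\hat{s}(t)$ to be one of the at most two lattice points bracketing $\hat{f}(t)$; relative to a fixed half-level this means $\hat{s}(t)$ is \emph{forced above} when $\hat{f}(t)\ge(\ell+1)\vartheta$, \emph{forced below} when $\hat{f}(t)\le\ell\vartheta$, and free otherwise. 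Hence every admissible $\hat{s}$ must cross $(\ell+1/2)\vartheta$ at least once between consecutive forced episodes of opposite sign, so $c_\ell(\hat{s})$ is bounded below by the number of sign alternations of this forced pattern, counting the anchor side of $\hat{s}(0^-)=0$.

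The heart of the argument is to verify that the lazy pursuit attains this per-level bound for \emph{every} $\ell$ simultaneously. One checks the stickiness property: during a free episode for level $\ell$ the lazy pursuit never crosses $(\ell+1/2)\vartheta$, since not overshooting $\hat{f}$ keeps it at the bracketing point on the side it entered on, so $\hat{s}_{IF}$ changes side only when forced. Consequently $c_\ell(\hat{s}_{IF})$ equals the forced-alternation count, which is exactly $\min_{\hat{s}}c_\ell(\hat{s})$, giving the termwise domination $c_\ell(\hat{s}_{IF})\le c_\ell(\hat{s})$ for every admissible $\hat{s}$ and every $\ell$. Summing over $\ell$ yields $\mathrm{TV}(\hat{s}_{IF})\le\mathrm{TV}(\hat{s})$. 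Since $\mbox{IF}_{\vartheta}(f)$ is itself admissible by the Alexiewicz error bound $\|\mbox{IF}_{\vartheta}(f)-f\|_A<\vartheta$ of (\ref{eq:quantizationForm}), the infimum is attained and equals $\|\mbox{IF}_{\vartheta}(f)\|_1$, which is (\ref{eq:sparsenessProp}). The discrete recursion (\ref{eq:DefLIFd}) is the specialization to piecewise-constant $\hat{f}$, and the send-on-delta statement of Section~\ref{ss:zero} follows by the reduction of SOD to IF applied to the derivative.

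The main obstacle I anticipate is exactly this simultaneity step: a purely pointwise per-level bound only gives $\sum_\ell\min_{\hat{s}}c_\ell\le\min_{\hat{s}}\sum_\ell c_\ell$, so the crux is the stickiness lemma, which lets a \emph{single} path realize all per-level minima at once and thereby termwise-dominate every competitor. Two technical points require care. First, the continuous-time setting with superimposed Dirac impulses makes $\hat{f}$ discontinuous, so at a jump of $\hat{f}$ the pursuit may cross several half-levels and emit several spikes at once; I would check that the non-overshoot reset still lands $\hat{s}_{IF}$ on the bracketing lattice point and that crossings at such jump times are counted consistently in the decomposition. Second, I would confirm that the dynamics in (\ref{eq:LIFsample}) with $\alpha=0$ and subtraction reset indeed coincide with the lazy-pursuit description used throughout, so that the reduction is faithful.
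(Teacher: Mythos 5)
Your argument is correct, but it takes a genuinely different route from the paper's. The paper's proof (Appendix C) stays in the spike domain and proceeds by induction over the trigger points $i_1<i_2<\ldots$ of $\mbox{IF}_{\vartheta}(f)$: it first observes that the Alexiewicz norm of the difference of two admissible spike trains lies in $\vartheta\,\mathbb{N}_0$ and is therefore $0$ or $\vartheta$, then combines a parity argument (the discrepancy $d$ of cumulative sums must satisfy $d/2\in\mathbb{Z}$, hence $d=0$) with a chain of inequalities to contradict the assumption that a competitor $s^*$ is strictly sparser up to some trigger point. You instead pass to the antiderivative domain via $A_0$ and reduce the claim to total-variation minimization of a $\vartheta\mathbb{Z}$-valued path anchored at $0$ inside an open sup-norm tube around $\hat f$, settled by the coarea decomposition $\mathrm{TV}(\hat s)=\vartheta\sum_{\ell}c_\ell(\hat s)$ together with the stickiness lemma showing that the lazy pursuit attains the forced-alternation lower bound at every half-level simultaneously --- and you correctly identify this simultaneity as the crux, since per-level minimization alone would only bound $\sum_\ell\min c_\ell$ rather than $\min\sum_\ell c_\ell$. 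Your route buys a termwise certificate $c_\ell(\hat s_{IF})\le c_\ell(\hat s)$ that makes optimality geometrically transparent and treats discrete time, continuous time and Dirac impulses uniformly; it also relies only on the incremental description $\hat s_{IF}(t)=\hat s_{IF}(t^-)+q_{\vartheta}\bigl(\hat f(t)-\hat s_{IF}(t^-)\bigr)$ of integrate-and-fire rather than on the memoryless identity $A(\mbox{IF}_{\vartheta}(f))=q_{\vartheta}(A(f))$ of Theorem~\ref{th:IFDecomposition} that the paper's induction invokes (a safer choice, since truncation toward zero is not equivariant under shifts by $\vartheta\mathbb{Z}$ when the argument changes sign, so the incremental form is the one that unconditionally matches the reset-to-mod dynamics). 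The paper's induction, in exchange, yields the stronger intermediate statement that the cumulative spike mass of any admissible competitor already matches that of IF at every trigger point. The two caveats you flag are real but routine: a single jump of $\hat s_{IF}$ from $a\vartheta$ to $b\vartheta$ that does not overshoot $\hat f(t)$ certifies the forced condition for every half-level strictly between $a\vartheta$ and $b\vartheta$, so multi-level crossings at Dirac impulses are counted consistently, and the recursion (\ref{eq:DefLIFd}) with $\beta=1$ (equivalently (\ref{eq:LIFsample}) with $\alpha=0$, $t_r=0$ and subtraction reset) is verbatim your lazy-pursuit update rule.
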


As a corollary of Theorem~\ref{th:sparsenessIF} we immediately obtain the maximal sparsity property of SOD when applied on
Lipschitz continuous functions. First, recall that a real $f:[a, b] \rightarrow \mathbb{R}$ is said to be Lipschitz continuous if there is a constant $K>0$ such that for all $x \neq y \in [a,b]$ all fractions $\frac{f(x) - f(y)}{x - y}$ are bounded by $K$. 
The condition of Lipschitz continuity ensures that the sampling process based on SOD is well defined, 
in particular there is a minimal positive length for the time intervals between two subsequent sampling points.
Due to the theorem of Rademacher~\cite{Rademacher1919} any Lipschitz continuous function $f$ is differential almost everywhere on $[a,b]$, i.e.,
$f'$ does exist except on a set of Lebesgue measure zero. Further, the second fundamental theorem of calculus is valid for its derivation $f'$, i.e. $f'$ is integrable and satisfies $\int_a^b f' dt = f(b) - f(a)$. See also~\cite{Zurcher2007} and the proof of Theorem 6.15 in ~\cite{Heinonen2001}.
Now, applying Theorem~\ref{th:sparsenessIF} on $f'$ yields the desired 

\begin{corollary}[Extremal Sparsity Property of SOD]
\label{cor:sparsenessSOD}
For any Lipschitz continuous function $f:[0, T] \rightarrow \mathbb{R}$, SOD-based threshold-based sampling with threshold $\vartheta>0$ satisfies the maximal sparsity property
\begin{equation}
\label{eq:sparsenessProp1}
\|f(t_k) - f(t_{k+1})\|_1 = \min\{\|s\|_1:  s \in \mathring{B}^{\|.\|_{\infty}}_{\vartheta}(\mbox{Off}[f]) 
\cap \mathbb{S}_{\vartheta}  \},
\end{equation}
where $\mbox{Off}[f] := f - f(0)$ and $\mathring{B}^{\|.\|_{\infty}}_{\vartheta}(f):= \{g \in \mathcal{F}: \|g - f\|_{\infty} < \vartheta\}$ and 
$\mathbb{S}_{\vartheta}$ is defined as in Theorem~\ref{th:sparsenessIF}.
\end{corollary}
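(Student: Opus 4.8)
The plan is to reduce the corollary to Theorem~\ref{th:sparsenessIF} by passing to the derivative, exploiting the remark from Section~\ref{s:LIF} that SOD applied to $f$ coincides with IF applied to $f'$. Since $f$ is Lipschitz, Rademacher's theorem (already invoked just before the corollary) guarantees that $f'$ exists almost everywhere and is bounded by the Lipschitz constant $K$; hence $f' \in \mathcal{F}$ and $\mbox{IF}_{\vartheta}(f')$ is well-defined in the sense of the earlier well-definedness remark.

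First I would verify the identity $\mbox{SOD}_{\vartheta}(f) = \mbox{IF}_{\vartheta}(f')$ at the level of both trigger times and spike amplitudes. By the second fundamental theorem of calculus (valid for $f'$ as recalled before the corollary), the integral accumulated by IF between two consecutive resets $t_k$ and $T$ equals $\int_{t_k}^{T} f'(t)\,dt = f(T) - f(t_k)$. Thus the IF firing condition $|\int_{t_k}^{T} f'| \geq \vartheta$ is literally the SOD condition $|f(T) - f(t_k)| \geq \vartheta$, and continuity of $f$ makes the threshold hit exactly, forcing each spike amplitude to equal $\pm\vartheta$ in agreement for both schemes. The Lipschitz estimate $|f(T) - f(t_k)| \leq K|T - t_k|$ moreover enforces a minimal inter-spike distance $\vartheta/K$, so that the output is a genuine element of $\mathbb{S}_{\vartheta}$.

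Next I would translate the Alexiewicz geometry on $f'$ into the maximum-norm geometry on $\mbox{Off}[f]$. For any $s \in \mathbb{S}_{\vartheta}$, integrating gives $\int_0^T (s - f')\,dt = A(s)(T) - (f(T) - f(0))$, whence $\|s - f'\|_A = \|A(s) - \mbox{Off}[f]\|_{\infty}$, where $A$ is the antiderivative of (\ref{eq:Acont}) at $\alpha = 0$. Under the bijection $s \leftrightarrow A(s)$ between spike trains and their staircase primitives, this identifies the open Alexiewicz ball $\mathring{B}^A_{\vartheta}(f')$ with the open maximum-norm ball $\mathring{B}^{\|\cdot\|_{\infty}}_{\vartheta}(\mbox{Off}[f])$, while the objective $\|s\|_1 = \sum_k |s_k|$ is carried to the total variation of $A(s)$. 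Applying Theorem~\ref{th:sparsenessIF} to $f'$ and transporting the minimization through this change of variables then yields (\ref{eq:sparsenessProp1}), with the left-hand side $\|\mbox{IF}_{\vartheta}(f')\|_1$ reading as the SOD spike sum $\sum_k |f(t_{k+1}) - f(t_k)|$.

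The step I expect to be the main obstacle is the careful bookkeeping in this change of variables: one must check that the correspondence $s \leftrightarrow A(s)$ is consistent with the offset conventions (the default $s_0 = 0$ and the subtraction of $f(0)$ defining $\mbox{Off}[f]$), that the sup-norm ball constraint is correctly read on the staircase $A(s)$ rather than on the spike train itself, and that the $l_1$-objective and the ball constraint transport simultaneously so that a minimizer for IF maps to a minimizer for SOD. The identity $\mbox{SOD}_{\vartheta}(f) = \mbox{IF}_{\vartheta}(f')$ is conceptually clean but likewise needs attention at the initial instant $t_0 = 0$, where the offset $f(0)$ enters.
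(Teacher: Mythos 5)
Your proposal matches the paper's own argument: the paper likewise derives the corollary by applying Theorem~\ref{th:sparsenessIF} to $f'$, invoking Rademacher's theorem and the second fundamental theorem of calculus for Lipschitz functions so that the IF integral of $f'$ reproduces the SOD increments $f(T)-f(t_k)$ and the Alexiewicz ball around $f'$ becomes the sup-norm ball around $\mbox{Off}[f]$. The bookkeeping you flag (offsets, reading the ball constraint on the primitive $A(s)$) is exactly what the paper leaves implicit, and your treatment of it is correct.
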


\section{Sparsity in the General Case}
\label{s:GeneralCase}
The Alexiewicz Accuracy Condition~(\ref{eq:Q3}) can be decomposed into a recursive decision problem. 
Note that the proof for $\vartheta = 1$ also generalizes to arbitrary thresholds
$\vartheta>0$.
To select an admissible spike train $s = (s_1, s_2, \ldots)$ we have to check
\begin{eqnarray}
\label{eq:AlexAlternative}
-1   < f_1 - s_1  <  1, \nonumber \\
-1   < f_{k+1} - s_{k+1} + \beta (z_k - s_k) <  1, \nonumber
\end{eqnarray}
hence,
\begin{eqnarray}
\label{eq:AlexAlternative1}
f_1 -1  < s_1 <  f_1 + 1, \nonumber \\
f_{k+1} + \beta (z_k - s_k)  -1   <  s_{k+1} <  f_{k+1} + \beta (z_k - s_k)  +1, \nonumber
\end{eqnarray}
which leaves two options 
\begin{eqnarray}
\label{eq:AlexAlternative2}
s_{k+1} \in \{q_{k+1}^{(1)}, q_{k+1}^{(2)}\}
\end{eqnarray}
in case of $f_{k+1} + \beta (z_k - s_k) \not \in \mathbb{Z}$, 
where $q_{k+1}^{(1)} = q(f_{k+1} + \beta (z_k - s_k))$ and 
$q_{k+1}^{(2)} = q(f_{k+1} + \beta (z_k - s_k)) + \mbox{sgn}(f_{k+1} + \beta (z_k - s_k))\}$.
In case of $f_{k+1} + \beta (z_k - s_k) \in \mathbb{Z}$ the choice is uniquely determined by 
$s_{k+1} := q_{k+1}^{(1)}$. 

This way, (\ref{eq:AlexAlternative2}) can be utilized to enumerate all grid points representing admissible spike trains in the open Alexiewicz ball to examine experimental evaluations, see Fig.~\ref{fig:BoxPlot1}-\ref{fig:BoxPlot3a}.
As a next step we look at an example that illustrates that the extremal sparsity property of IF cannot hold in the general case.

For this we utilize (\ref{eq:AlexAlternative2})  to generate an alternative to LIF
by choosing $s_1 := q_1^2$ and $s_k := q_k^1$ for $k>1$. Consider a step function 
$h(t):=\sum_k (-1)^{k+1} a_{(\alpha)} 1_{[k, k+1)}(1)$, where $\int_{0}^1 a_{(\alpha)}\, e^{-\alpha \, (1-t) }dt = 1$, i.e.,
$a_{\alpha}:= \alpha/(1- e^{-\alpha})$. In this case $\mbox{LIF}_{\alpha, 1}(h)$ yields a spike train of alternating signs with unit 
time intervals in-between. In this case the spike train $s^*$ given by $q^2_0, q^1_1, q^1_2, \ldots$ reduces to a single spike at $t=0$,
i.e., $s^* = 1\, \delta(t)$. See Fig.~\ref{fig:counterExample} for a discrete variant of this example.
For this example we get $\|\mbox{LIF}_{\alpha, 1}(h)|_{[0, n]}\|_1 = n$ while $\|s^*\|_1 = 1$. 
This means that there are some constellations for which leaky integrate-and-fire is not at all sparse.
However, applying some noise relatives this counter-example, see Fig.~\ref{fig:counterExampleNoise}, indicating that under reasonable conditions LIF can perform in a sparse way. As the evaluations of Fig.~\ref{fig:BoxPlot1}-\ref{fig:BoxPlot3a} demonstrate, the more smooth $f$ the higher the probability that for any spike-train quantization operator $Q_{\alpha, \vartheta}$ satisfying the conditions (\ref{eq:Q1}), (\ref{eq:Q2}) and (\ref{eq:Q3}) the difference $\| Q_{\alpha, \vartheta}(f)\|_1 - \| \mbox{LIF}_{\alpha, \vartheta}(f)\|_1$ becomes non-negative. Note that the mass above the red line in the Figures~\ref{fig:BoxPlot1}-\ref{fig:BoxPlot3a} represents the probability  
that LIF yields extremal sparsity, which in all our experiments is observed to be above $0.99$. 
 
\begin{figure}[ht]
	\centering
	\includegraphics[width=0.45\linewidth]{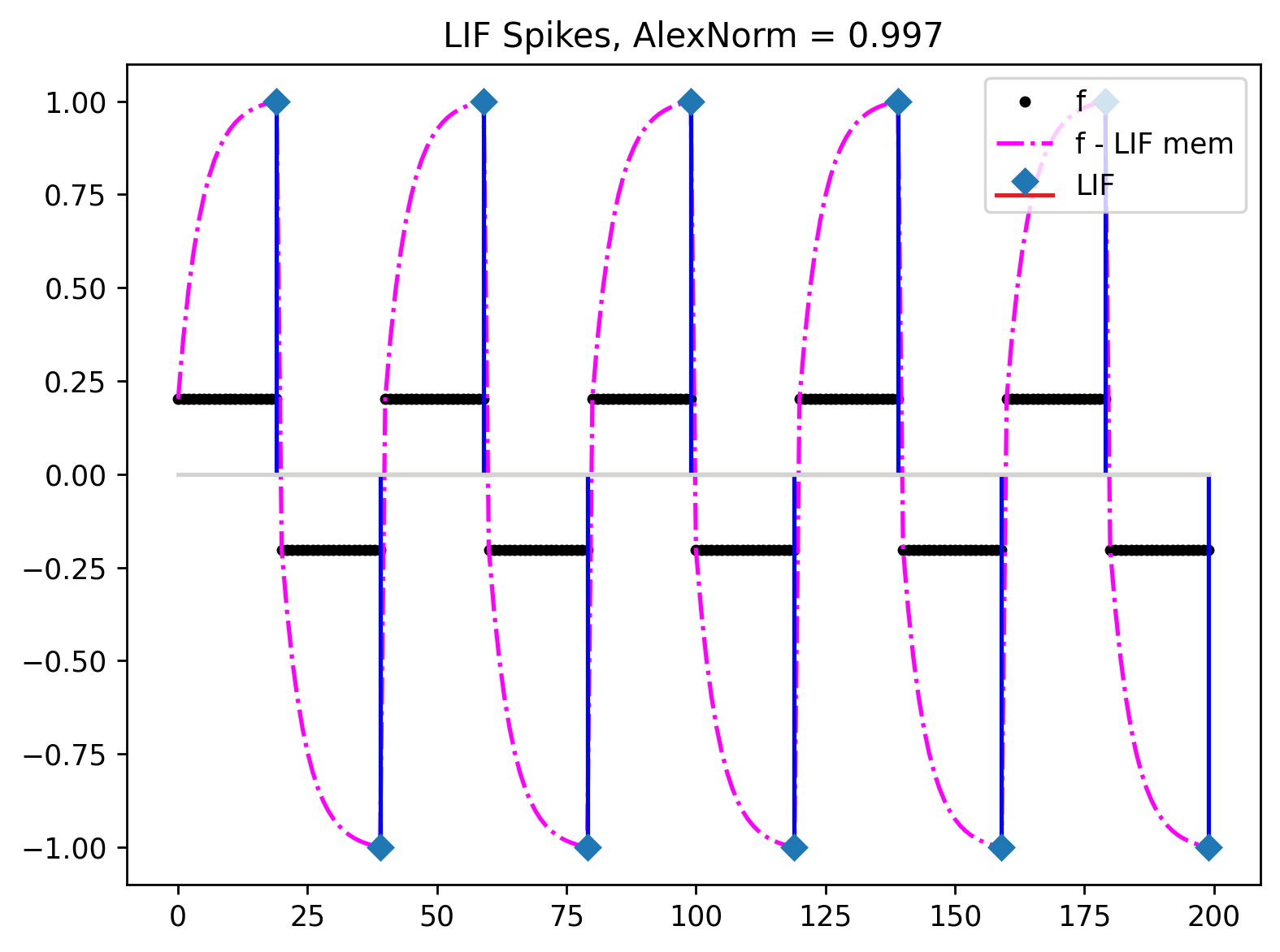}
	\includegraphics[width=0.45\linewidth]{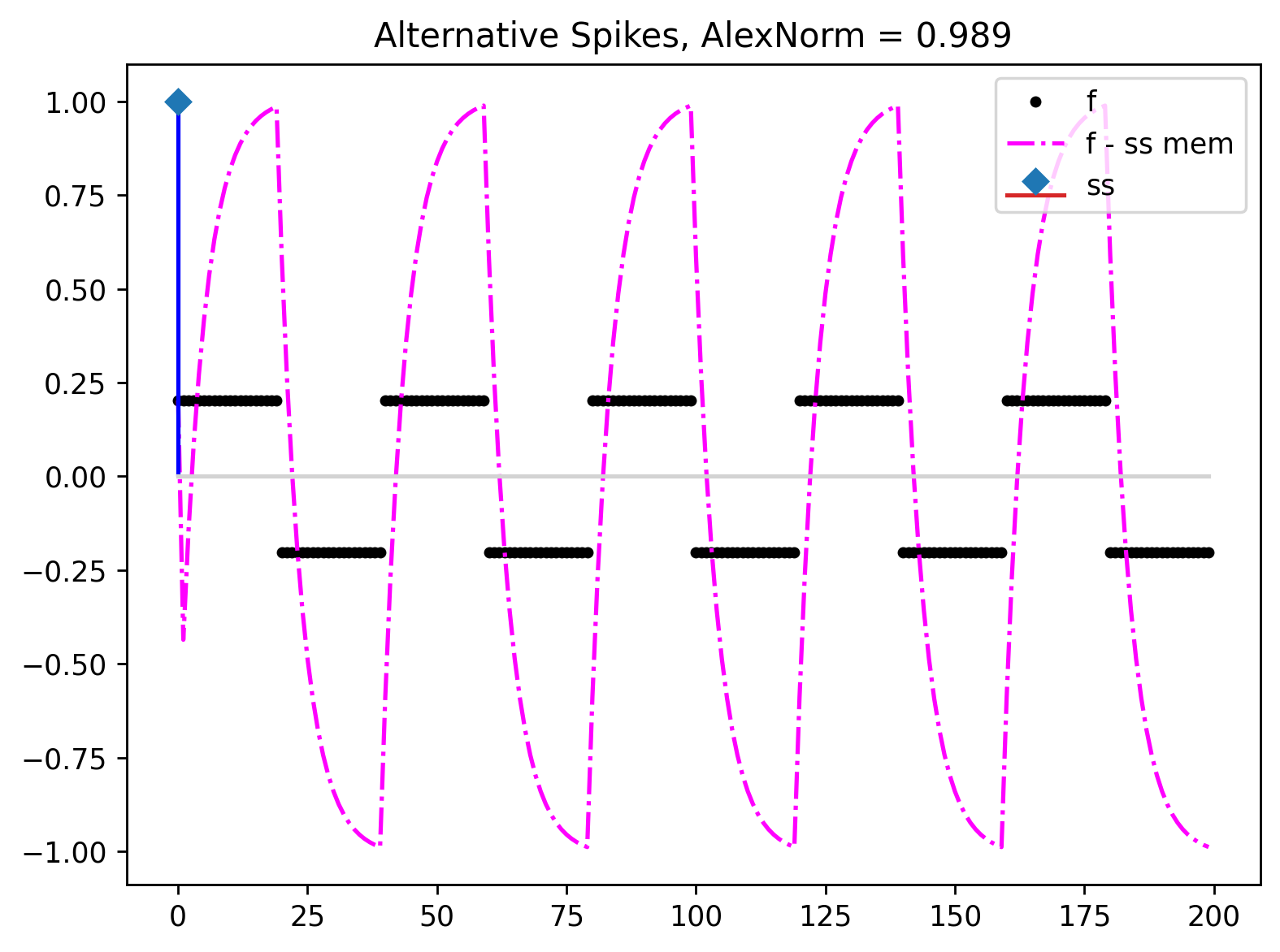}
		\caption{Example that for which LIF generates a spike train of alternating signs, here with $\beta = 0.8$.
		As a consequence, the alternative spike train quantization operator $Q(f)$ given by $q^2_0, q^1_1, q^1_2, \ldots$ in the scheme 
		(\ref	{eq:AlexAlternative2}) consists only of a single spike, that suffices to keep the signal below threshold.
		}
			\label{fig:counterExample}
\end{figure}

\begin{figure}[ht]
	\centering
	\includegraphics[width=0.45\linewidth]{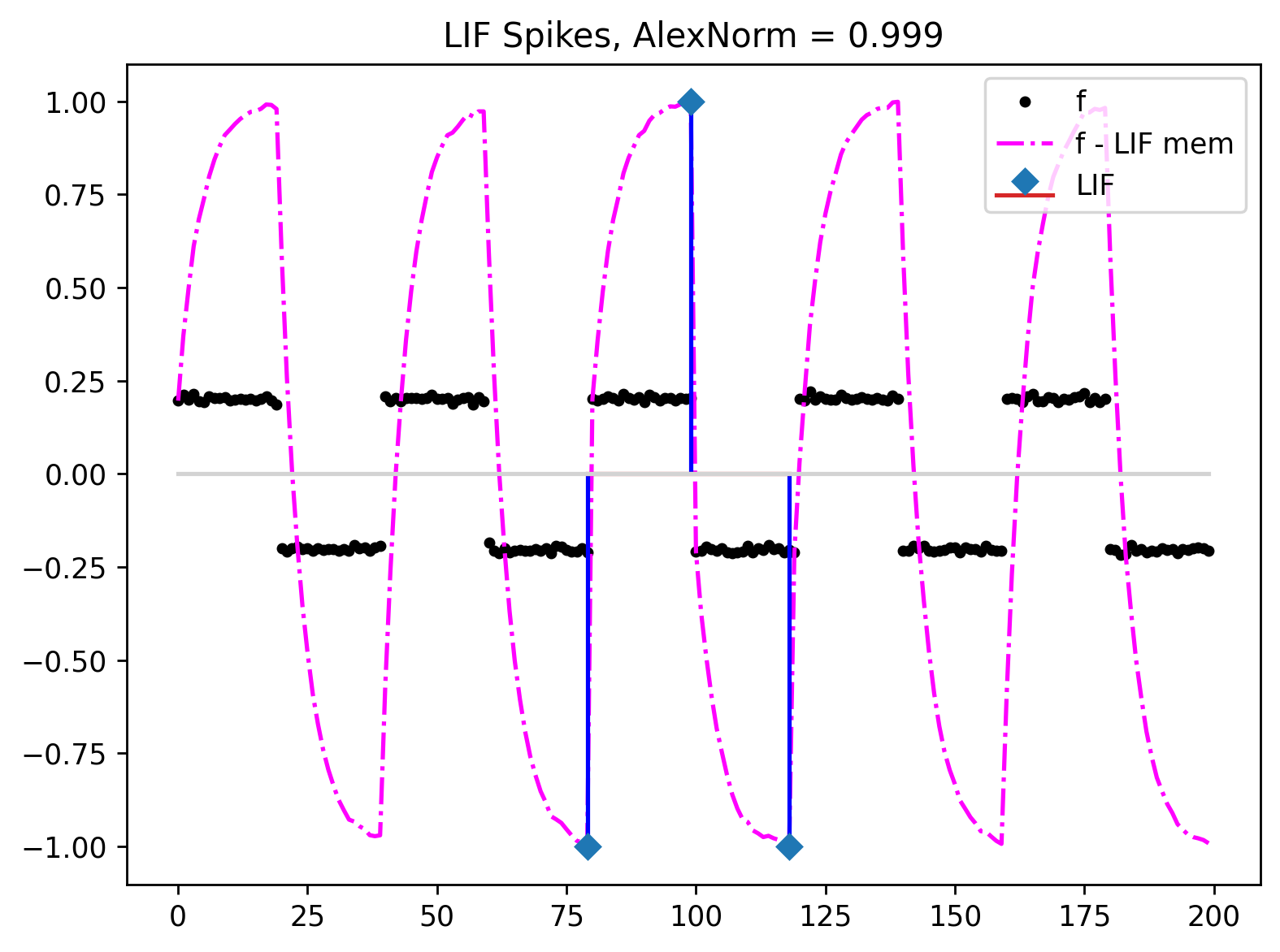}
	\includegraphics[width=0.45\linewidth]{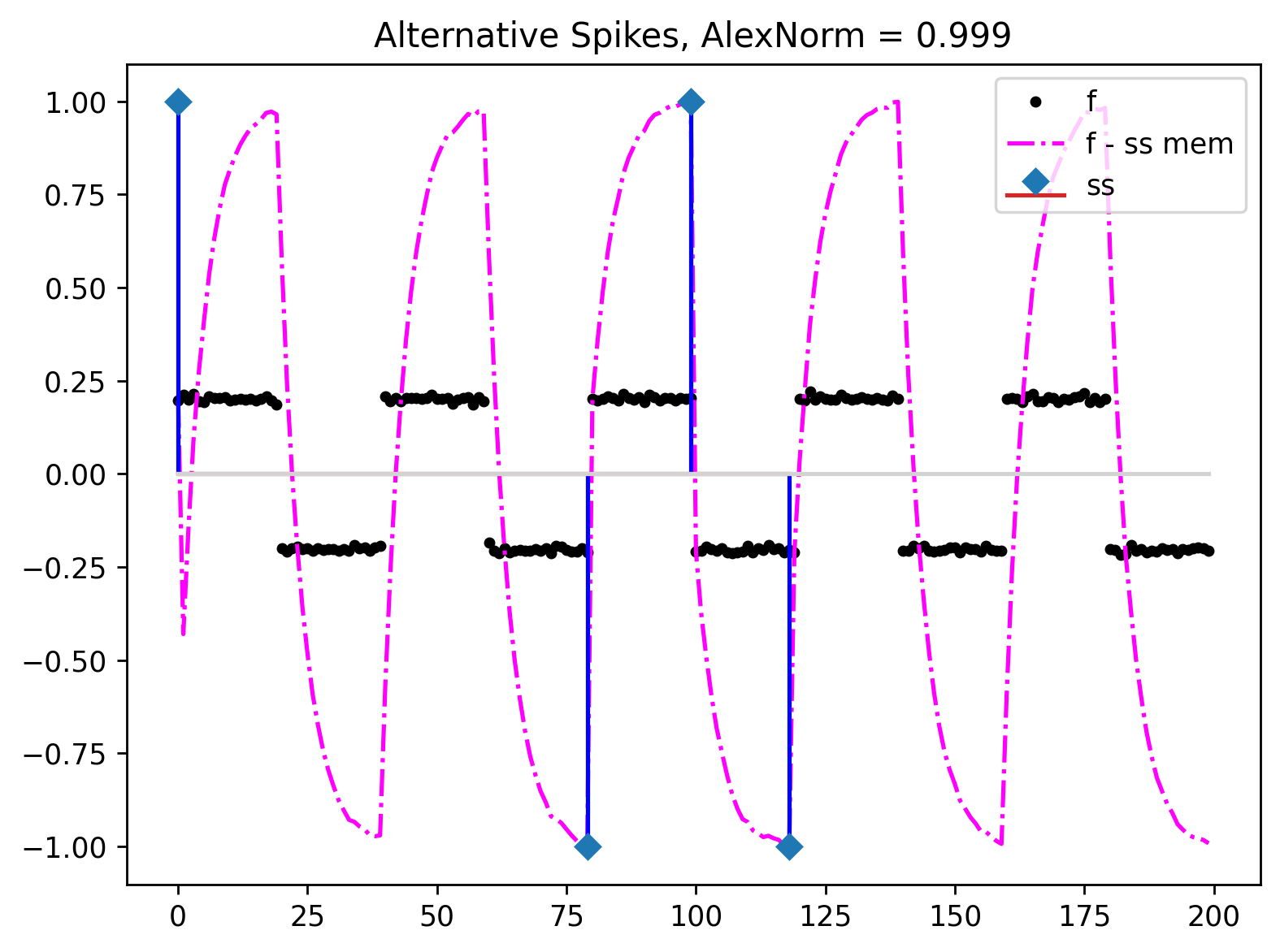}
		\caption{Same example as in Fig.~\ref{fig:counterExample} but with additive Gaussian noise $\mathcal{N}(0, a/33)$.
		This noise brings LIF back into a sparsity regime.
		}
			\label{fig:counterExampleNoise}
\end{figure}

\begin{figure}[ht]
	\centering	
	\includegraphics[width=0.45\linewidth]{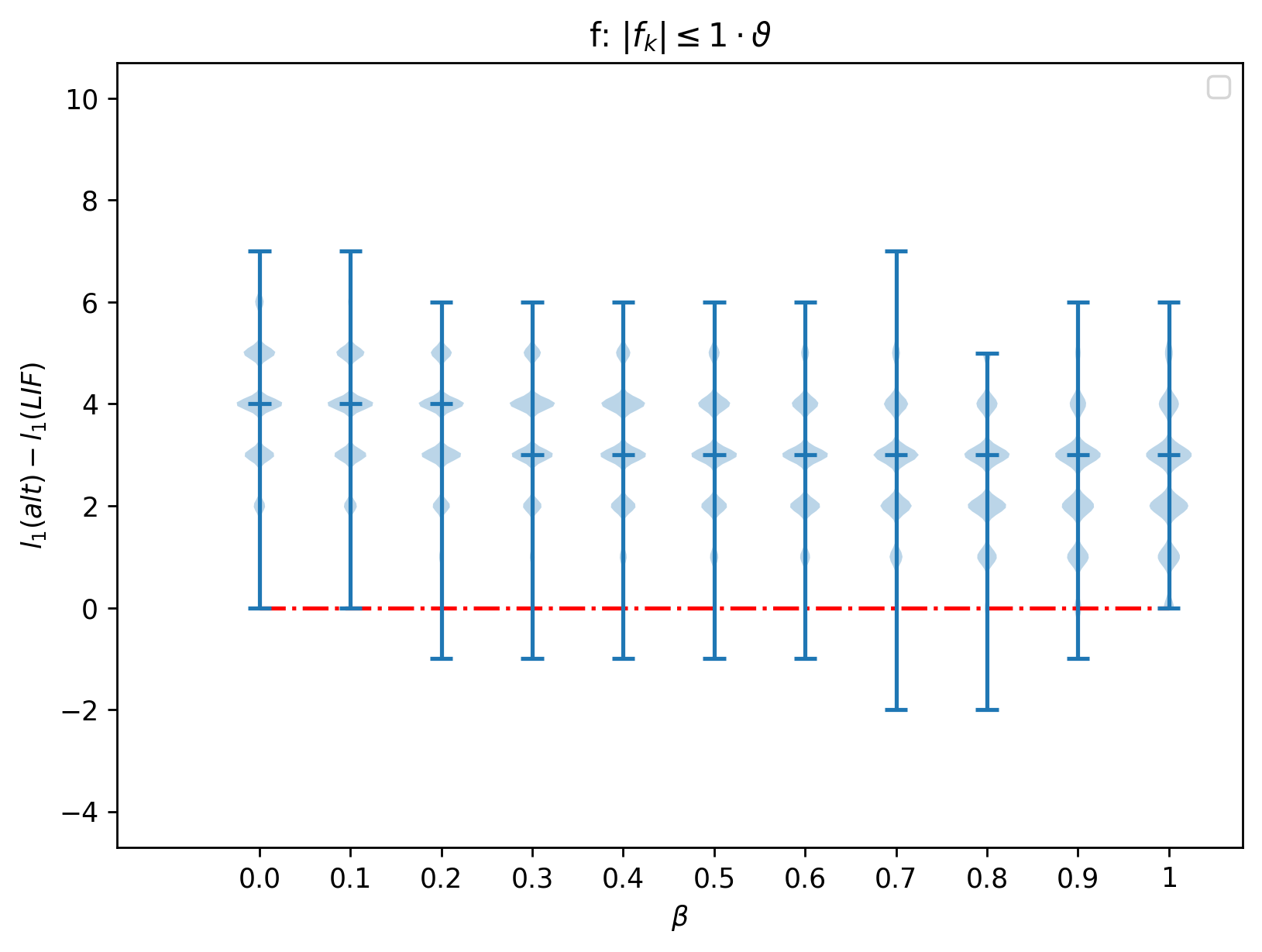}	
	\includegraphics[width=0.45\linewidth]{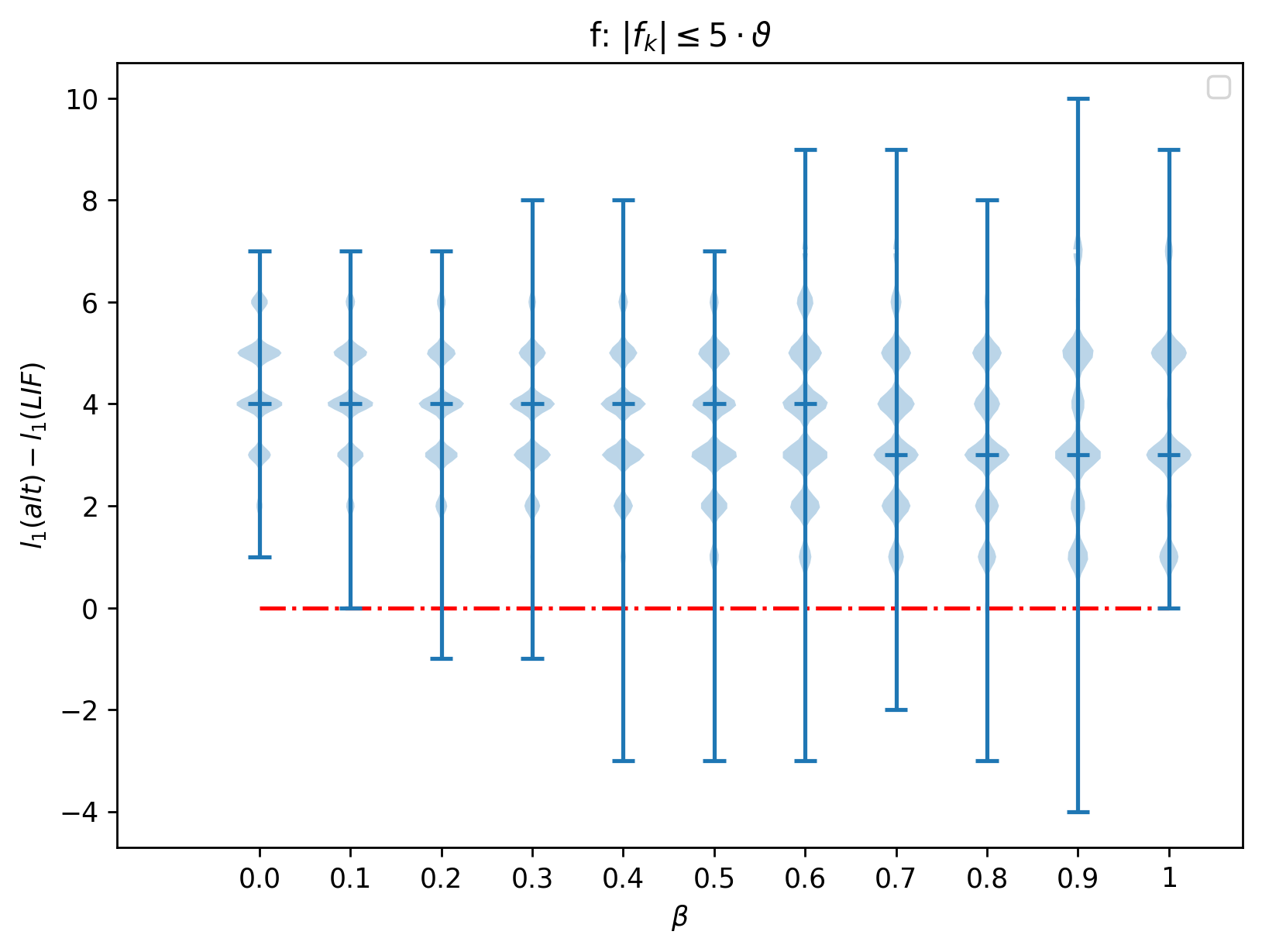}	
		\caption{
		Experimental test of the extremal sparsity property of LIF by exhaustive search of 
		spike train quantization operators $Q_{\beta, \vartheta}$ satisfying (\ref{eq:Q1}), (\ref{eq:Q2}) and (\ref{eq:Q3}).   
		The figures show violin-plots of $\|Q_{\beta, \vartheta}(f)\|_1 -\|\mbox{LIF}_{\beta, \vartheta}(f)\|_1$ 
		for random functions $f$ satisfying $|f_{i}|\leq K\,\vartheta$ ($\vartheta = 1$, $i\leq 10$)
		for different $\beta$. Left: $K= 1$, right: $K=5$. }
			\label{fig:BoxPlot1}
\end{figure}
\begin{figure}[ht]
	\centering
	\includegraphics[width=0.45\linewidth]{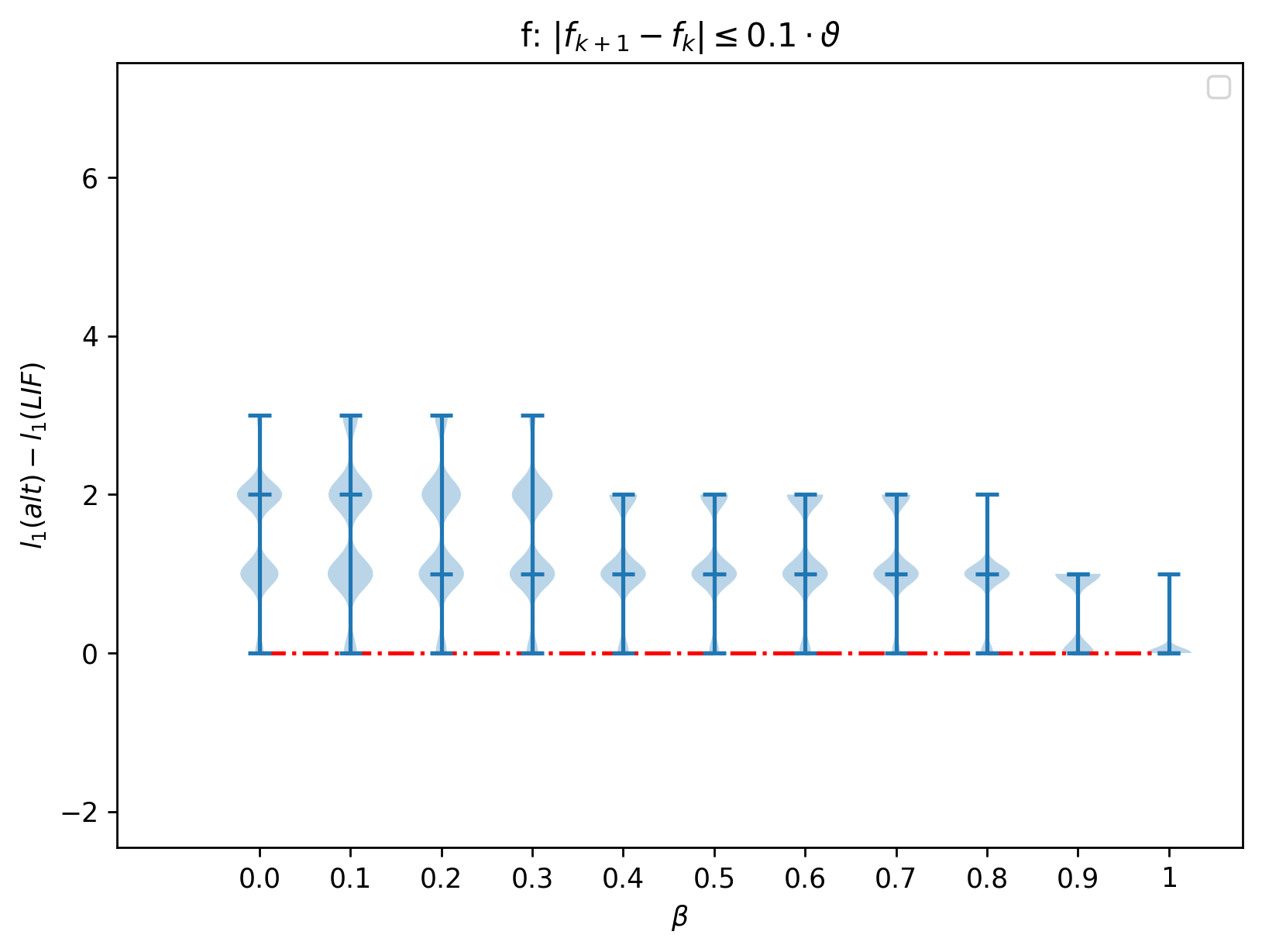}	
	\includegraphics[width=0.45\linewidth]{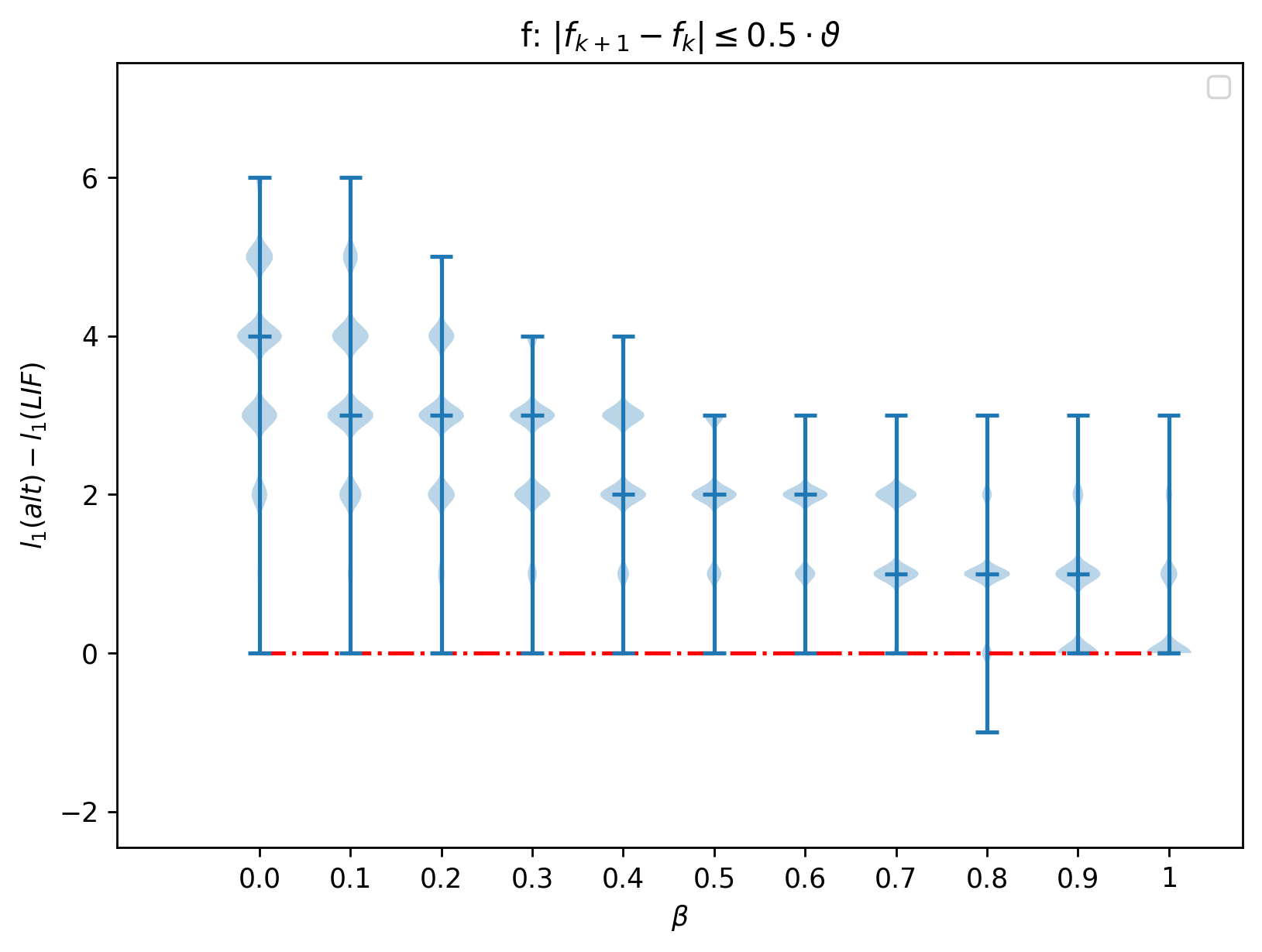}	
		\caption{
		Like Fig.~\ref{fig:BoxPlot2} but for $f$ satisfying $|f_{i+1}-f_i|\leq K\,\vartheta$ ($\vartheta = 1$, $i\leq 10$)
		for different $\beta$. Left: $K=0.1$, right: $K=0.5$}
			\label{fig:BoxPlot2}
\end{figure}
\begin{figure}[ht]
	\centering
	\includegraphics[width=0.45\linewidth]{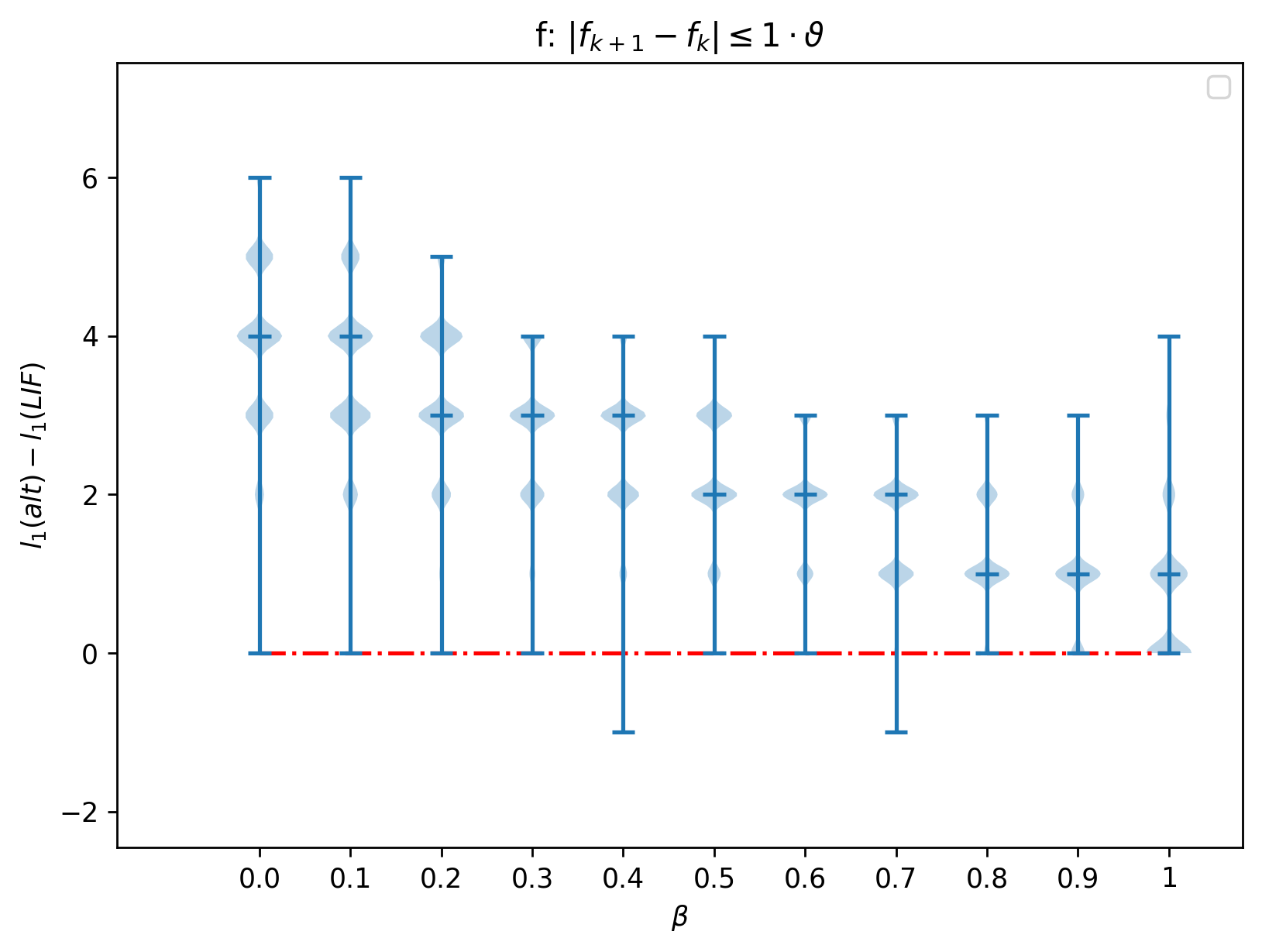}	
	\includegraphics[width=0.45\linewidth]{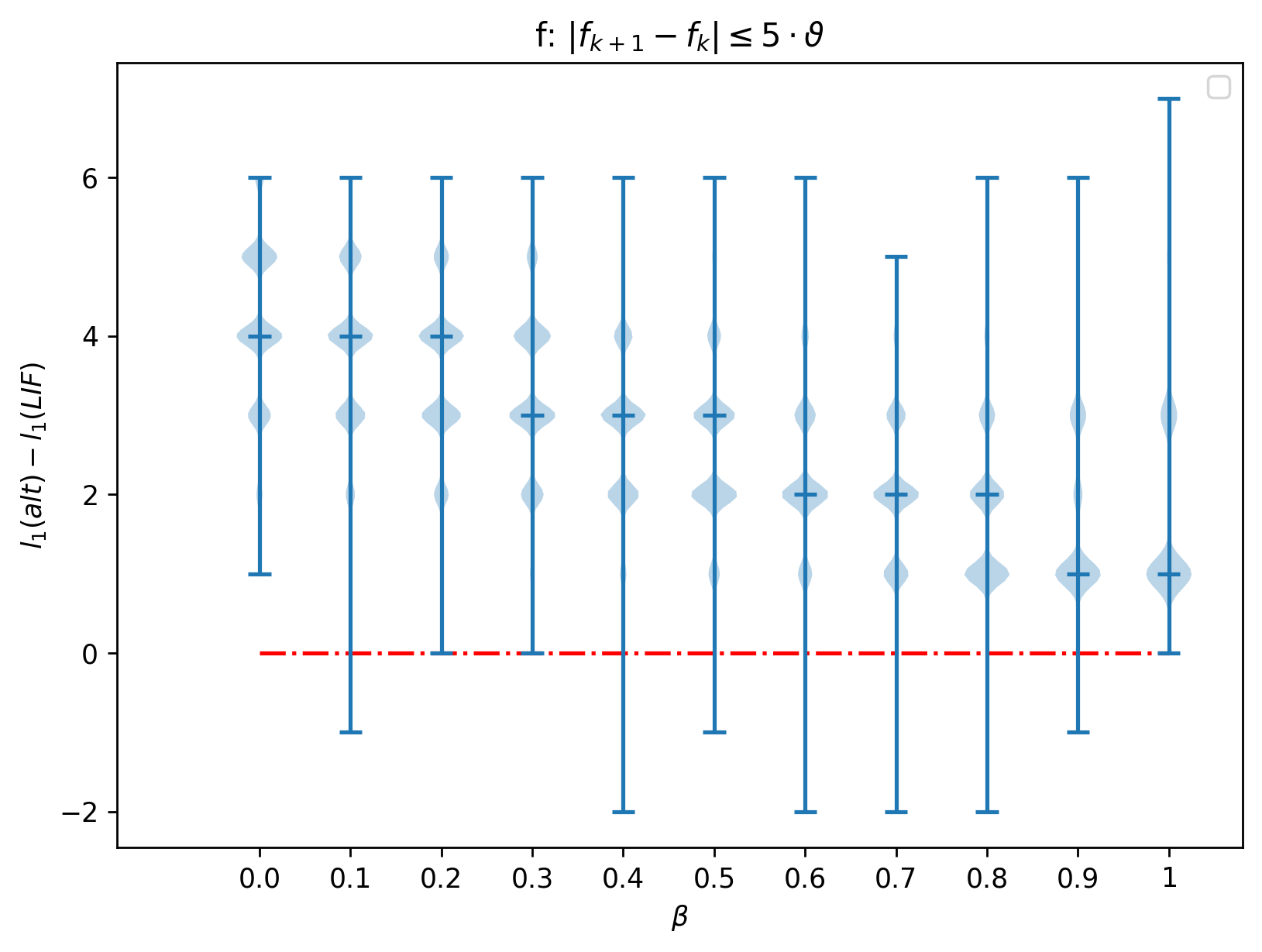}	
		\caption{
		Continuation of Fig.~\ref{fig:BoxPlot2}. Left: $K= 1$, right: $K=5$.}
			\label{fig:BoxPlot2a}
\end{figure}
\begin{figure}[ht]
	\centering	
	\includegraphics[width=0.45\linewidth]{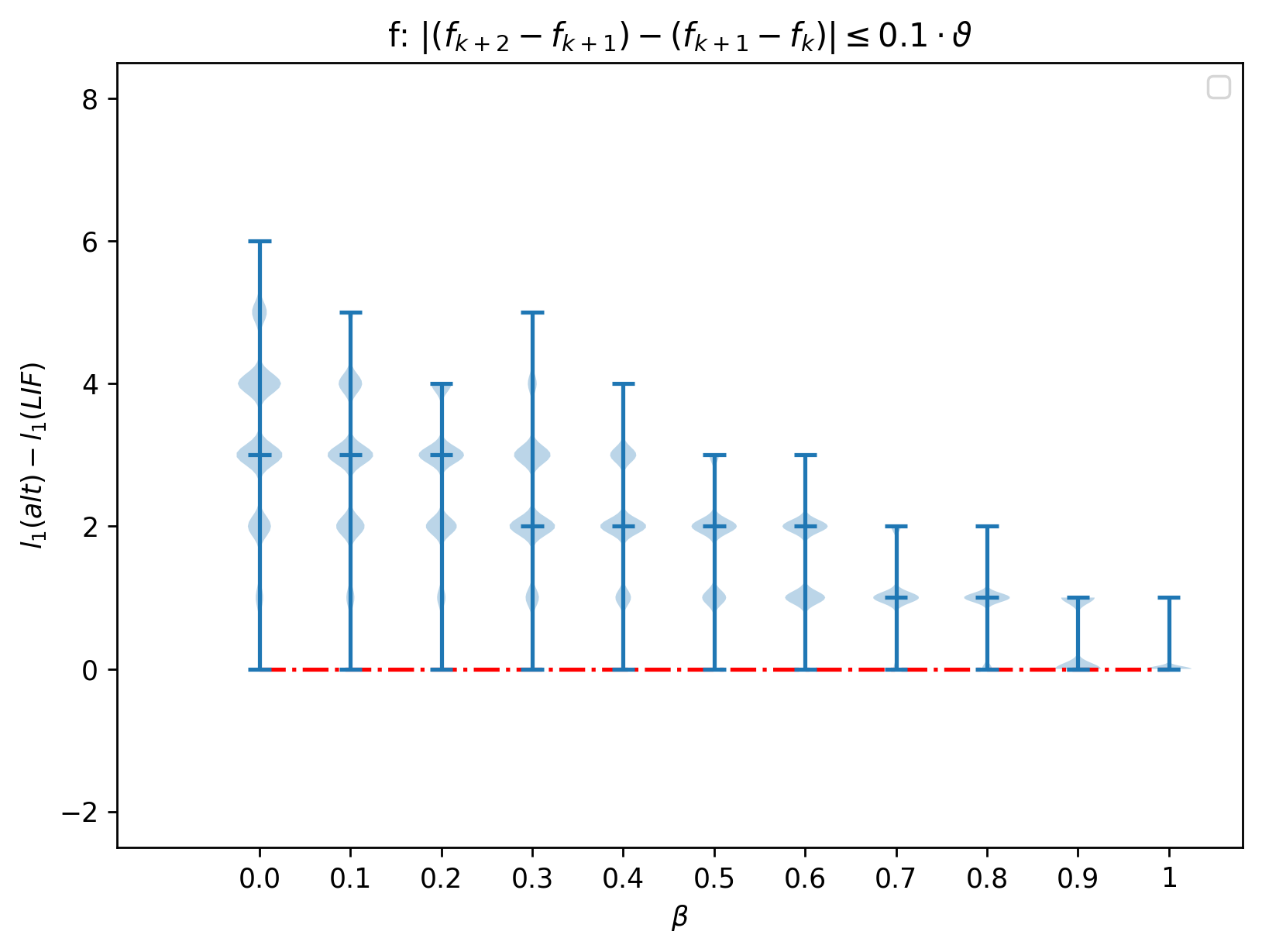}	
	\includegraphics[width=0.45\linewidth]{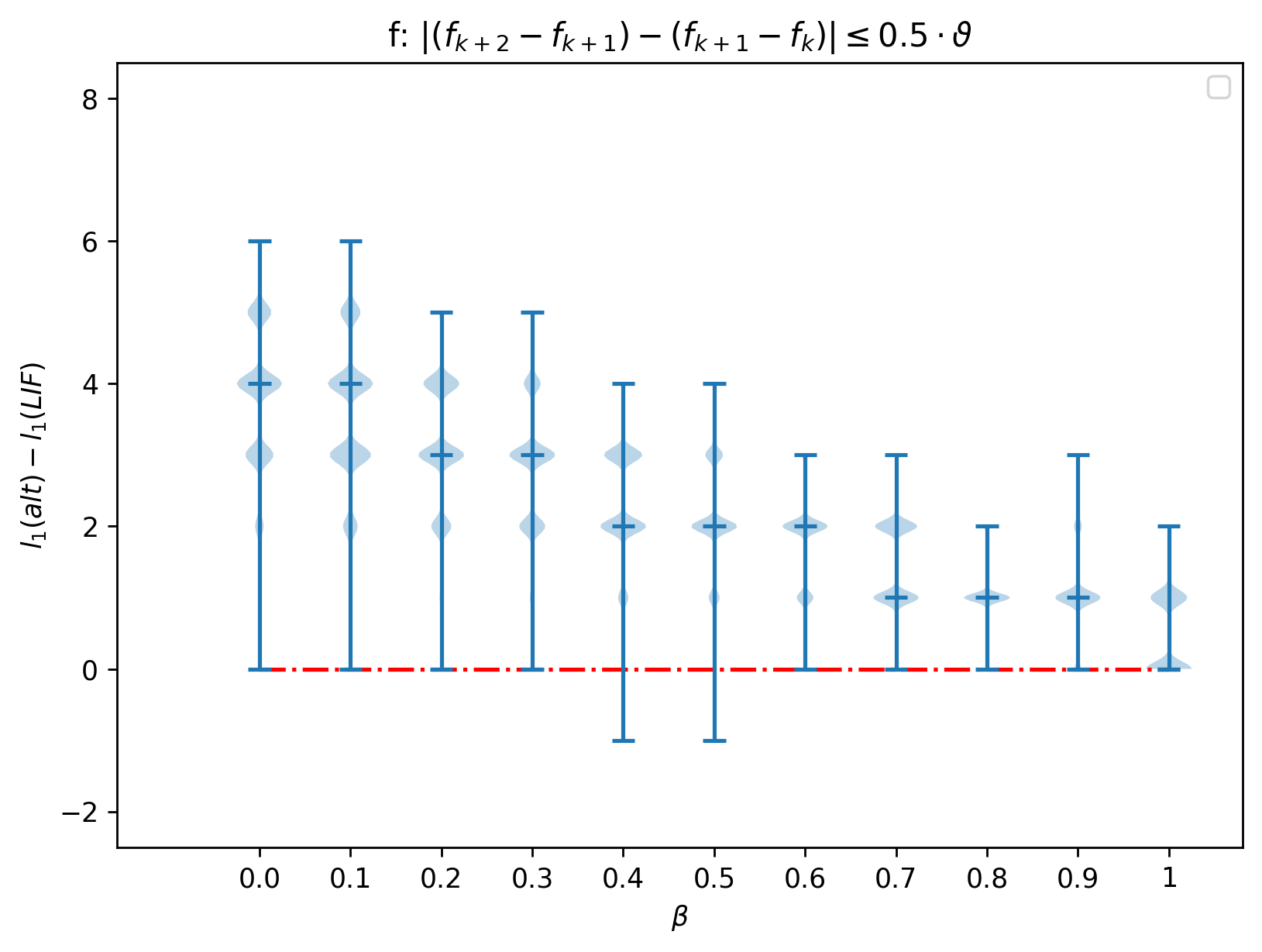}	
		\caption{Like Fig.~\ref{fig:BoxPlot1} and Fig.~\ref{fig:BoxPlot1} but for $f$ satisfying a bound of differences of second order, i.e.,
		$|(f_{k+2}- 2\, f_{k+1} + f_k|\leq K\,\vartheta$ ($\vartheta = 1$, $k\leq 10$) for different $\beta$. 
		Left: $K=0.1$, right: $K=0.5$.
		}
			\label{fig:BoxPlot3}
\end{figure}
\begin{figure}[ht]
	\centering	
	\includegraphics[width=0.45\linewidth]{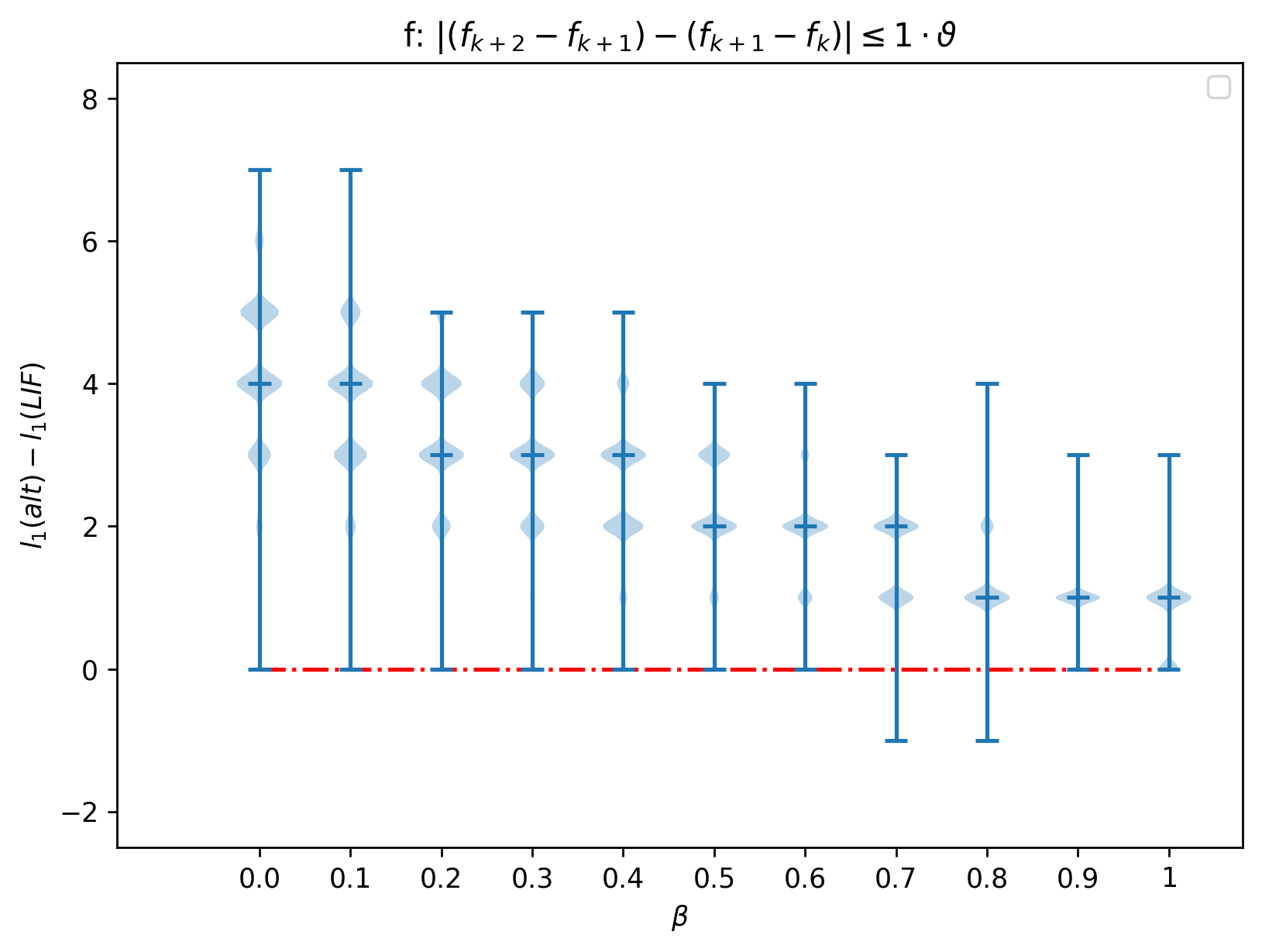}	
	\includegraphics[width=0.45\linewidth]{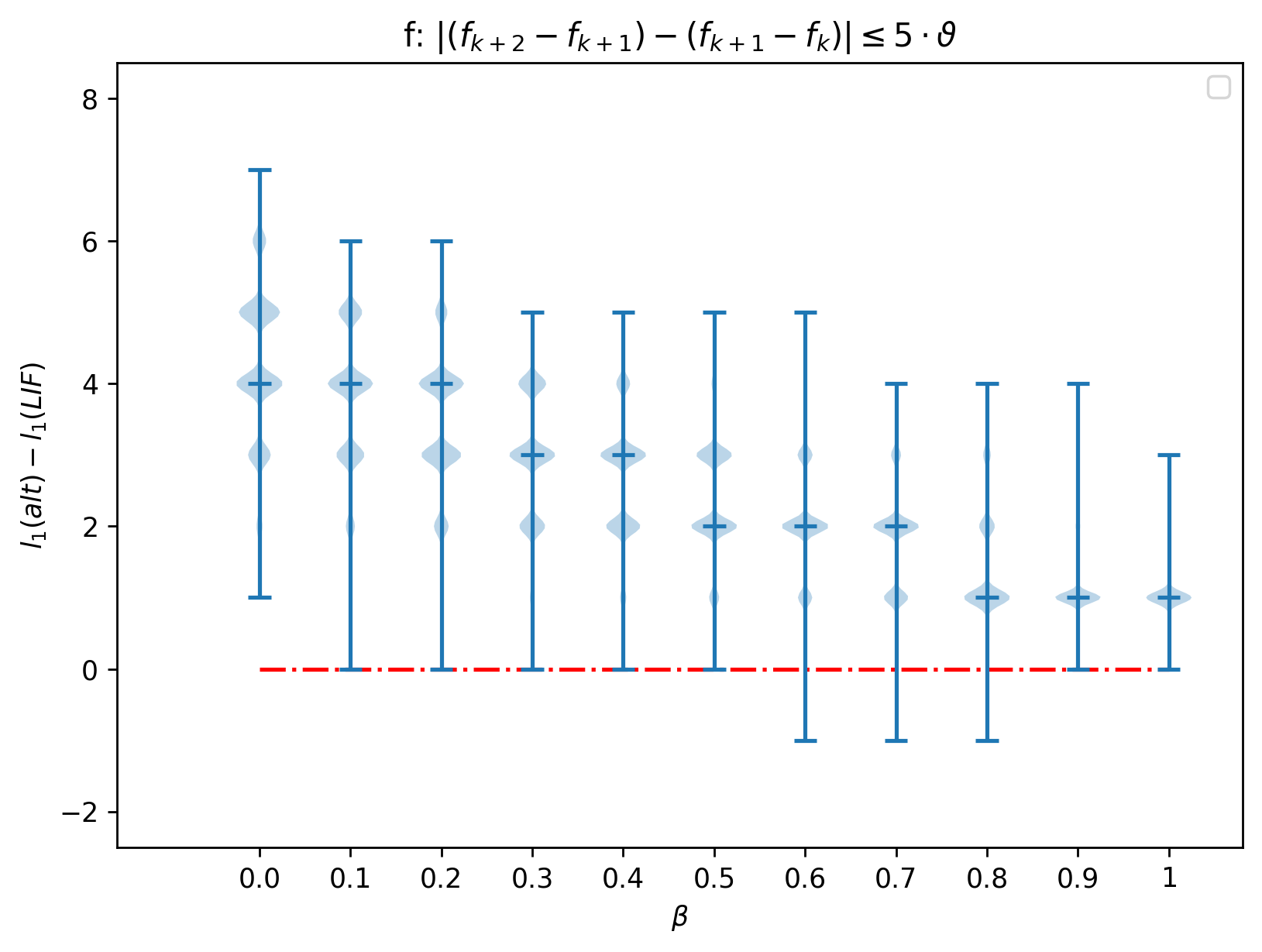}	
		\caption{Continuation of Fig.~\ref{fig:BoxPlot3}. 
		Left: $K= 1$, right: $K=5$.
		}
			\label{fig:BoxPlot3a}
\end{figure}

\section{Conclusion}
\label{s:Conclusion}
As main contribution of this paper we introduced a mathematical framework that allows us
to analyze Leaky Integrate-and-Fire (LIF) as a quantization operator based 
on the Alexiewicz norm. 
We showed how this approach can be used to derive 
sparsity bounds for analog-to-spike conversion based on LIF and how it
distinguishes in terms of an extremal sparsity property.
Our analysis shows in a rigorous mathematical way that information encoding using Threshold-Based Representation (TBR) is intrinsically linked with an analog-to-digital conversion that guarantees sparsity by design, which is of great importance for energy efficiency in edge computing applications. 
As a corollary we show that, thanks to the theorem of Rademacher, the same way of reasoning can also be applied on Send-on-Delta (SOD) by applying
our integration-based approach on the derivative of a Lipschitz continuous input signal. 
Consequently, for SOD we also obtain a law of maximal sparsity where the Alexiewicz norm mutates into the familiar maximum norm.  
In our future research we will exploit these findings for a theory of spike-based signal processing and its connection to spiking neural networks.  

\section*{Acknowledgments}
This work was supported by (1) the ’University SAL Labs’ initiative of Silicon Austria Labs (SAL) and its Austrian partner universities for applied fundamental research for electronic based systems, (2) the COMET Programme via SCCH funded by the Austrian ministries BMK, BMDW, and the State of UpperAustria, and (3) the Horizon Europe Programme via the NeuroSoC project with Europe Grant Agreement number 101070634.

\section*{Appendix A: Proof of upper bound of Theorem~\ref{eq:l1inequ} for general $f$}
We give the proof for $f$ being integrable, bounded and superimposed by a finite number of Dirac pulses. 
Since there are only a finite number of Dirac impulses and the spikes are discrete we can 
choose a sufficiently small $\Delta t>0$ such that all open intervals $(n \Delta, (n+1) \Delta)$ contain at most a single spike 
and at most a single Dirac impulse and such that each spike, resp. each Dirac impulse, falls into some of these intervals. 
Consider now the refined partition resulting from $(k\,\Delta)_k$ and $(t_k)_k$ and let denote it by $(\tau_k)_k$.
Let us set $F_{\tau_k} := \int_{(\tau_{k-1}, \tau_k]} e^{-\alpha(\tau_k - t)} f(t)dt$ and note that
$|F_{\tau_k} | \leq \|f|_{(\tau_{k-1}, \tau_k]}\|_1$. Further, set $\beta_k := e^{-\alpha\, (\tau_k - \tau_{k-1})}$, then we get
\begin{eqnarray}
 &  &\|\mbox{LIF}_{\alpha, \vartheta}(f)\|_1 = \sum_k |s_{t_k}| = \sum_k |s_{\tau_k}|  \nonumber \\
					& = &  \sum_k |q_{\vartheta}(F_{\tau_k} + \beta_k\, (z_{\tau_{k-1}} - s_{\tau_{k-1}}))| \nonumber \\
				& = & 
								\sum_k |F_{\tau_k} + \beta_k\, (z_{\tau_{k-1}} - s_{\tau_{k-1}})| -  |F_{\tau_k} + \beta_k\, (z_{\tau_{k-1}} - s_{\tau_{k-1}}) \nonumber \\
				& & 	\,\,\,\,\, - q_{\vartheta}(F_{\tau_k} + \beta_k\, (z_{\tau_{k-1}} - s_{\tau_{k-1}}))|\nonumber \\
				& \leq & 
				\sum_k |F_{\tau_k}| + \beta_k\, |z_{\tau_{k-1}} - s_{\tau_{k-1}}| -  
				\beta_{k+1}\,|\underbrace{F_{\tau_k} + \beta_k\, (z_{\tau_{k-1}} - s_{\tau_{k-1}})}_{z_{\tau_k}} - \underbrace{q_{\vartheta}(z_{\tau_k})}_{s_{\tau_k}}|\nonumber \\
				& \leq & \sum_k |F_{\tau_k}| \leq \|f\|_1. \nonumber
\end{eqnarray}

\section*{Appendix B: Computing the lower bound of Theorem~\ref{eq:l1inequ}}
The $l_1$-distance of the Alexiewicz ball to the origin leads to a convex $l_1$-minimization problem constrained on a polyhedral set.
In contrast to general $l_1$-minimization problems which are tackled either on the basis of simplex or interior point approaches  
(\cite{barrodale1966algorithms,Donoho2008FastSO,Khodabandeh2011}), the problem in our setting can be resolved explicitly by a simple recursion as outlined next.

Because of (\ref{eq:ball}) we get
\begin{eqnarray}
\label{eq:infBall}
 & \|\mathring{B}_{\alpha, \vartheta}(f)\|_1 \nonumber \\
							& =  \inf \{ \|p\|_1: p \in f + A^{-1}_{\alpha} \circ \mathring{B}_{\infty, \vartheta}(0)\} \nonumber \\
							& =  \inf \{ \|f + A^{-1}_{\alpha}\, c\|_1: |c_k| < \vartheta\} \nonumber \\
							& =  \inf \{ |f_1 + c_1| + |f_2 - \beta\, c_1 + c_2| + \ldots + |f_n - \beta\, c_{n-1} + c_n|: |c_k| < \vartheta\},  
\end{eqnarray}
where $\beta := e^{-\alpha}$.
To solve the minimization problem (\ref{eq:infBall}), first of all, let us consider Lemma~\ref{lem:min}.

\begin{lemma}
\label{lem:min}
For any $a, b \in \mathbb{R}$, $\vartheta>0$ and $\beta \in [0,1]$ we have

\begin{equation}
\label{eq:min}
|a + x^*| +  |b -  \beta \, x^* + y^*| = \min\{|a + x| +  |b -  \beta \, x + y|: x,y \in [-\vartheta, \vartheta] \}, 
\end{equation}

where $x^* := -\mbox{sgn}(a)\, \min\{|a|, \vartheta\}$ and $y^* := -\mbox{sgn}(b-\beta\, x^*)\, \min\{|b-\beta\, x^*|, \vartheta\}$.
For $\beta <1$ the minimum is uniquely determined.
\end{lemma}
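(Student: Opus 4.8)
The plan is to exploit the separable structure of the objective \(g(x,y):=|a+x|+|b-\beta x+y|\), in which the second variable \(y\) enters only the second summand. First I would carry out the inner minimization over \(y\) for a fixed \(x\): since \(y\mapsto|b-\beta x+y|\) is a shifted absolute value, clamping \(b-\beta x\) towards zero gives the optimal value \(\max\{|b-\beta x|-\vartheta,0\}\), attained at \(y=-\mbox{sgn}(b-\beta x)\,\min\{|b-\beta x|,\vartheta\}\); evaluating this at \(x=x^*\) reproduces exactly the claimed \(y^*\). Writing \(\psi(t):=\max\{|t|-\vartheta,0\}\), the problem then reduces to minimizing the single-variable convex function \(h(x):=|a+x|+\psi(b-\beta x)\) over \(x\in[-\vartheta,\vartheta]\), and it remains to show that \(x^*\) is the minimizer, since then the overall minimum equals \(h(x^*)=|a+x^*|+|b-\beta x^*+y^*|\), the left-hand side of (\ref{eq:min}).

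The key step is a projection identity for the first summand. Since \(x^*=-\mbox{sgn}(a)\,\min\{|a|,\vartheta\}\) is precisely the metric projection of \(-a\) onto \([-\vartheta,\vartheta]\), I would verify through the three cases \(|a|\leq\vartheta\), \(a>\vartheta\), \(a<-\vartheta\) that
\[
|a+x| = |a+x^*| + |x-x^*| \qquad \mbox{for all } x\in[-\vartheta,\vartheta],
\]
i.e. moving \(x\) away from \(x^*\) inside the box increases the first term at unit rate. This is where the hypothesis \(\beta\in[0,1]\) does the decisive work: \(\psi\) is \(1\)-Lipschitz, so \(\psi(b-\beta x)\geq\psi(b-\beta x^*)-\beta\,|x-x^*|\), i.e. the second term can decrease at rate at most \(\beta\leq1\). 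Combining the two estimates yields
\[
h(x)-h(x^*) \geq |x-x^*| - \beta\,|x-x^*| = (1-\beta)\,|x-x^*| \geq 0,
\]
which proves that \((x^*,y^*)\) attains the minimum in (\ref{eq:min}).

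For the uniqueness claim when \(\beta<1\), the last display gives \(h(x)-h(x^*)\geq(1-\beta)\,|x-x^*|>0\) for every \(x\neq x^*\), so the outer minimizer \(x^*\) is unique, while for the fixed value \(x=x^*\) the inner minimizer \(y^*\) (the clamp of \(-(b-\beta x^*)\) onto \([-\vartheta,\vartheta]\)) is always unique; hence the joint minimizer is unique. The routine part is the bookkeeping in the projection identity and the Lipschitz estimate across the kinks of the two absolute values, but once the identity \(|a+x|=|a+x^*|+|x-x^*|\) is in hand the argument is immediate and requires no further case analysis on the sign of \(b-\beta x\). I expect the main conceptual obstacle to be recognizing that the greedy choice \(x^*\) — which minimizes the first term alone, ignoring the second — is in fact globally optimal, and that this is exactly the content of the contraction factor \(\beta\leq1\); the argument would genuinely fail for \(\beta>1\).
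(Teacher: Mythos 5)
Your proof is correct and rests on the same idea as the paper's: the first summand varies in $x$ at unit rate while the second varies at rate at most $\beta\leq 1$, so the greedy minimizer $x^*$ of the first term alone is globally optimal. The paper phrases this as an informal slope comparison and minimizes over $x$ first (uniformly in $y$) before choosing $y^*$, whereas you reverse the order of partial minimization and make the slope argument rigorous via the projection identity $|a+x|=|a+x^*|+|x-x^*|$ on $[-\vartheta,\vartheta]$ combined with the $1$-Lipschitz bound on the partially minimized second term --- a welcome tightening of the same route rather than a different one.
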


\begin{proof}
Consider the functions $p(x):= |a + x|$ and $q_y(x):= |b - \beta\, x + y|$ on the interval $[-\vartheta, \vartheta]$.
Since the absolute value of the slope of $q_y$ is $\beta$ which is  less or equal the modulus of the  slope of $p$ 
the minimum of $p$ is also a minimum of $p +  q_y$. Since $x^* := -\mbox{sgn}(a)\min\{|a|, \vartheta\}$ minimizes $p$, 
it also minimizes $p +  q_y$ for any $y$. Fixing $x = x^*$, therefore $y = y^* := -\mbox{sgn}(b-\beta\, x^*)\, \min\{|b-\beta\, x^*|, \vartheta\}$
minimizes $p +  q_y$ over $x, y \in [-\vartheta, \vartheta]$. The uniqueness property follows the fact that 
the modulus of the slope in the neighborhood of the minimum is not vanishing.
See Fig.~\ref{fig:minLemma} for an illustration. 
\begin{figure}[ht]
	\centering
	\includegraphics[width=0.35\linewidth]{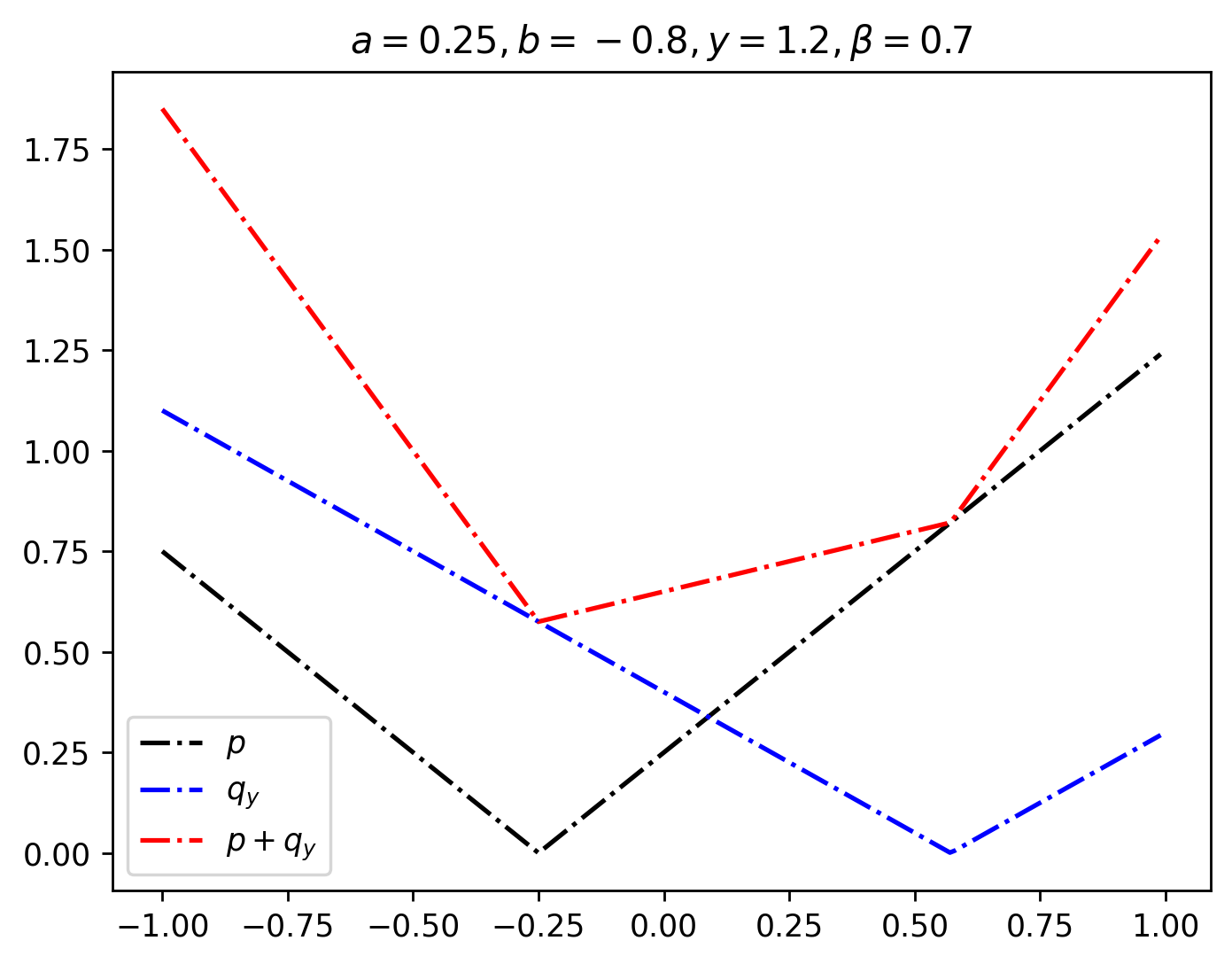}
		\caption{Illustration of Lemma~\ref{lem:min} that the minimizer of $p(x)=|a + x|$ also determines the minimizer of 
		$p + q_y$, where $q_y(x):= |b - \beta\, x + y|$ and $\beta \leq 1$.
		}
			\label{fig:minLemma}
\end{figure}
\end{proof}

\begin{lemma}
\label{lem:minl1}
For any $f \in \mathbb{R}^n$, $\vartheta>0$ and $\beta \in [0,1]$ 
\begin{eqnarray}
\label{eq:p}
p     & := & (f_1 + c_1^*, f_2 - \beta\, c_1^* + c_2^*,\ldots, f_n - \beta\, c_{n-1}^* + c_n^*)^T \in \overline{B}_{\alpha, \vartheta}(f), \\
c_1^*   & := & -\mbox{sgn}(f_1)\, \min\{|f_1|, \vartheta\} \nonumber \\
c_{k+1}^*   & := & -\mbox{sgn}(f_{k+1}- \beta\, c_k^*)\, \min\{|f_{k+1}- \beta\, c_k^*|, \vartheta\} \nonumber 
\end{eqnarray}
is a closest $l_1$-point of the closed Alexiewicz ball $\overline{B}_{\alpha, \vartheta}(f)$ to the origin.
For $\beta <1$ all other points $\widetilde{p}\neq p, \widetilde{p} \in \overline{B}_{\alpha, \vartheta}(f)$ are strictly more distant.
\end{lemma}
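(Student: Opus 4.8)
The plan is to translate the geometric minimization into the explicit coordinate form already recorded in~(\ref{eq:infBall}) and then solve it by a forward greedy elimination justified termwise by Lemma~\ref{lem:min}. Concretely, by the closed-ball analogue of Lemma~\ref{lem:ball} every $p \in \overline{B}_{\alpha, \vartheta}(f)$ is of the form $p = f + A_{\alpha}^{-1} c$ with $\|c\|_{\infty} \leq \vartheta$, so that $\|p\|_1 = \Phi(c) := \sum_{k=1}^{n} |f_k + c_k - \beta\, c_{k-1}|$ with the convention $c_0 = 0$. Since the feasible set $[-\vartheta, \vartheta]^n$ is compact and $\Phi$ is continuous, a minimizer exists; it then remains to identify it with the point $p$ generated by the stated recursion, i.e. to show that the greedy choice of coordinates is globally optimal.

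I would proceed by induction on $n$. For $n = 1$ the problem is $\min_{|c_1| \leq \vartheta} |f_1 + c_1|$, whose solution is the projection of the unconstrained minimizer $-f_1$ onto $[-\vartheta, \vartheta]$, namely $c_1^* = -\mbox{sgn}(f_1)\,\min\{|f_1|, \vartheta\}$, matching the recursion. For the inductive step the crucial observation is that the variable $c_1$ enters $\Phi$ only through the first two summands, $|f_1 + c_1|$ (slope $1$ in $c_1$) and $|f_2 - \beta\, c_1 + c_2|$ (slope $\beta \leq 1$ in $c_1$), while $c_3, \ldots, c_n$ are absent from both. Applying Lemma~\ref{lem:min} with $a = f_1$, $b = f_2$ and $x = c_1$, $y = c_2$ shows that the minimizer $c_1^*$ of the first summand alone also minimizes the sum of the first two summands for every admissible value of $c_2$. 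As the remaining summands do not depend on $c_1$, fixing $c_1 = c_1^*$ cannot increase $\Phi$ for any $(c_2, \ldots, c_n)$, so there is a global minimizer with $c_1 = c_1^*$.

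Substituting $c_1 = c_1^*$ renders $|f_1 + c_1|$ a constant and leaves the minimization over $(c_2, \ldots, c_n)$ of $\sum_{k=2}^{n} |f_k + c_k - \beta\, c_{k-1}|$ with $c_1$ frozen. Rewriting the $k=2$ term as $|(f_2 - \beta\, c_1^*) + c_2|$, this is exactly an $(n-1)$-variable instance of the same problem with input vector $(f_2 - \beta\, c_1^*,\, f_3, \ldots, f_n)$. By the induction hypothesis its greedy solution is $c_2^* = -\mbox{sgn}(f_2 - \beta\, c_1^*)\,\min\{|f_2 - \beta\, c_1^*|, \vartheta\}$, continued by the same recursion for $c_3^*, c_4^*, \ldots$, which is precisely the claimed recursion. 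Hence $p$ is a closest $l_1$-point of $\overline{B}_{\alpha, \vartheta}(f)$ to the origin.

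For the uniqueness statement when $\beta < 1$ I would carry along the uniqueness clause of Lemma~\ref{lem:min}, where the minimizer is uniquely determined because the modulus of the slope does not vanish near the minimum; consequently each elimination step pins down the active coordinate uniquely and, by induction, the whole minimizer $c^*$ and thus $p$ is unique. The step deserving the most care is the decoupling argument: one must verify that fixing $c_1 = c_1^*$ is optimal not merely along the optimal continuation but for \emph{every} feasible value of the downstream variables, since only then is the reduction to an $(n-1)$-dimensional problem legitimate. This is exactly what Lemma~\ref{lem:min} delivers (the minimizer of the dominant-slope term minimizes the pair for all $y$), so the induction closes once that lemma is invoked with the correct identification of variables; the only further bookkeeping is to confirm that $-\mbox{sgn}(\cdot)\,\min\{|\cdot|, \vartheta\}$ indeed encodes the constrained minimizer of a single absolute-value term on $[-\vartheta, \vartheta]$.
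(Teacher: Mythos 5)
Your argument is correct and follows essentially the same route as the paper's proof: rewrite $\|p\|_1$ for $p\in \overline{B}_{\alpha,\vartheta}(f)$ in the coordinate form of~(\ref{eq:infBall}) and then eliminate the coefficients one at a time by successive application of Lemma~\ref{lem:min}, with the uniqueness for $\beta<1$ inherited from that lemma. Your write-up is in fact somewhat more careful than the paper's, in particular in spelling out why fixing $c_1=c_1^*$ is admissible for \emph{every} choice of the downstream variables and in handling the closed-ball/compactness point explicitly.
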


\begin{proof}
The proof is based on induction by applying Lemma~\ref{lem:min} on (\ref{eq:infBall}).
Indeed, by fixing the coefficients $c_k$ up to $n-1$, this way setting $R_{n_1} := |f_1 + c_1| + |f_2 - \beta\, c_1 + c_2| + \ldots + |f_{n-1} - \beta\, c_{n-2} + c_{n-1}|$,
we again get the structure of Lemma~\ref{lem:min}.
Therefore a minimizer of the last line of Equ. (\ref{eq:infBall}) over $|c_k|<\vartheta$ is  obtained by successively applying 
Lemma~\ref{lem:min} on 
$$
R_{n-1} + |\underbrace{(f_n - \beta\, c_{n-1})}_{a} + x| + |\underbrace{f_{n+1}}_{b} - \beta\, x + y|.
$$
\end{proof}

\section*{Appendix C: Proof of Theorem~\ref{th:sparsenessIF}}
\begin{proof}
Without loss of generality we may assume $\vartheta = 1$.  First we restrict to the discrete-time case 
$f(t) = \sum_{k=0}^N f_k\, \delta(t-k)$, $f_k \in \mathbb{R}$, to demonstrate the path of the proof.

Equ. (\ref{eq:quantizationForm}) implies that the spike train $s := \mbox{IF}_1(f) \in \mathbb{S}_1$ obtained by 
integrate-and-fire is an admissible solution satisfying the condition $s \in  \mathring{B}^A_1(f) \cap \mathbb{S}_{1}$ 
stated in Equ.~(\ref{eq:sparsenessProp}).

Now we have to show that $s = \mbox{IF}_1(f) \in \mathring{B}^A_1(f) \cap \mathbb{S}_{1} $ is at least as sparse as any other 
spike train $s^* \in \mathring{B}^A_1(f) \cap \mathbb{S}_{1}$, that is we have to show that
\begin{equation}
\label{eq:IFsparsi}
\|s\|_1 \leq \|s^*\|_1.
\end{equation}

Therefore, first note that the Alexiewicz norm $\|s - s^*\|_A$ of the difference 
$s - s^*$ is the maximum of partial sums of differences of integers ($s_k, s_k^* \in \mathbb{Z}$), hence $\|s - s^*\|_A \in \mathbb{N}_0$.
Now, from the triangle inequality of a norm and 
the condition $s, s^* \in \mathring{B}^A_1(f) \cap \mathbb{S}_{1}$ it follows that $\|s - s^*\|_A \leq \|s - f\|_A + \|s^* - f\|_A < 2$, 
hence  
\begin{equation}
\label{eq:assIFs*1}
\|s - s^*\|_A \in \{0,1\}.
\end{equation}

In the discrete-time case we denote $f = \sum_k f_k \, \delta(t - k)$,  $s = \sum_k s_k \, \delta(t - k)$ and 
$s^* = \sum_k s^*_k \, \delta(t - k)$. Let denote by $i_k$ the trigger points of $s$, i.e.,  $s_{i_k} \neq 0$, and by 
$i^*_k$ the points of $s^*$ with $s^*_{i_k} \neq 0$.
The proof is based on induction by showing that a maximally sparse solution $s^*$ satisfies $\sum_{j=0}^{i_k} |s_i| = \sum_{i=0}^{i_k} |s^*_i|$ for all $k$.
Note that this property entails $\sum_{j=0}^{n} |s_i| \leq  \sum_{j=0}^{n} |s^*_i| $ for all $s^* \in \mathring{B}^A_1(f) \cap \mathbb{S}_{1}$.
\\
\\
\noindent{\it Initial Induction Step.}
Note that $\|f|_{[0, i_1)}\|_A < 1$ and that $\|f|_{[0, t_1]}\|_A\geq 1$ which causes IF to trigger a spike with $s_{i_1} \in \mathbb{Z}$ so that the difference 
between the signal $f$ and the spike train $s$ on the interval $[0, t_1]$ becomes below threshold again, 
i.e., $\|(f-s)|_{[0, t_1]}\|_A < 1$. This implies that $i^*_1 \leq i_1 $, as the contrary $i^*_1 > i_1$ would imply 
$\|f - s^*\|_A \geq \|(f- s^*)|_{[0, i_1]}\|_A\geq 1$, contradicting the strict-within-ball assumption 
$s^* \in \mathring{B}^A_1(f) \cap \mathbb{S}_{1}$ of $s^*$, and, by this showing 
\begin{equation}
\label{eq:indStepDisc}
\sum_{i=0}^{i_1} |s_i| = \sum_{i=0}^{i_1} |s^*_i|.
\end{equation}
\vspace{1cm}
\noindent{\it  Induction Step for $k=n+1$.}
Now assume 
\begin{equation}
\label{eq:indStepDiscN}
\forall k \leq n: \sum_{j=0}^{i_k} |s_i| = \sum_{i=0}^{i_k} |s^*_i|
\end{equation}
and suppose the contrary for $k=n+1$, that is
\begin{equation}
\label{eq:indStepDiscContra}
\sum_{j=0}^{i_{n+1}} |s_i| > \sum_{i=0}^{i_{n+1}} |s^*_i|.
\end{equation}
The strict-within-ball assumption $s, s^* \in \mathring{B}^A_1(f) \cap \mathbb{S}_{1}$ implies
\begin{equation}
\label{eq:indStepDisc1}
\left|F_k - \sum_{i=0}^{k} s_i \right| < 1, \,\, \left|F_k - \sum_{i=0}^{k} s^*_i \right| < 1,
\end{equation}
for all $k$, where $F_k = \sum_{i=0}^{k} f_i$ and  $\sum_{i=0}^{k} s_i, \sum_{i=0}^{k} s^*_i \in \mathbb{Z}$.
Due to Theorem~\ref{th:IFDecomposition} 
we obtain $\sum_{i=0}^{i_n} s_i = q(F_{i_n}) \in \mathbb{Z}$, 
and because of the second inequality of (\ref{eq:indStepDisc1})
we have two possibilities 
\begin{equation}
\label{eq:pm}
\sum_{i=0}^{i_{n}} s^*_i \in \{q(a), q(a) \pm 1\}.
\end{equation}
depending on whether $q(a)\geq 0$ ($\pm 1$ in Equ.~(\ref{eq:pm}) becomes $+1$) or $q(a) < 0$ ($\pm 1$ in Equ.~(\ref{eq:pm}) becomes $-1$).

This way we get $\sum_{i=0}^{i_{n}} s_i = \sum_{i=0}^{i_{n}} s^*_i +d$ with $d \in \{-1, 0,1\}$.
By taking the induction assumption (\ref{eq:indStepDiscN}) into account and splitting the positive and the negative spikes into two groups we obtain
\begin{eqnarray}
\sum_{s_i>0} |s_i| -  \sum_{s_i<0} |s_i|  & \stackrel{(\ref{eq:pm})}{=} & \sum_{s^*_i>0} |s^*_i| -  \sum_{s^*_i<0} |s^*_i| + d, \nonumber\\
\sum_{s_i>0} |s_i| +  \sum_{s_i<0} |s_i|  & \stackrel{(\ref{eq:indStepDiscN})}{=} & \sum_{s^*_i>0} |s^*_i| +  \sum_{s^*_i<0} |s^*_i|, \nonumber
\end{eqnarray}
for $i \in \{1, \ldots, i_n\}$, and, therefore, by adding both lines, we finally get 
$\sum_{s_i>0} |s_i| - \sum_{s^*_i>0} |s^*_i| = d/2 \in \mathbb{Z}$, which can only be valid in the case of $d=0$.
Consequently we conclude from (\ref{eq:pm}) to get $\sum_{i=0}^{i_n} s_i = \sum_{i=0}^{i_n} s^*_i$.
Now let us consider the interval from $i_n$ to the next triggering point $i_{n+1}$. 
For $s = \mbox{IF}_1(f)$ we get 
\begin{equation}
\label{eq:sIndDisc}
\left|\sum_{i=0}^{i_{n}} f_i- \sum_{i=0}^{i_{n}} s_i + \sum_{i = i_{n}+1}^{i_{n+1}} f_i \right| \geq 
\left|s_{i_{n+1}}\right| \stackrel[]{(\ref{eq:indStepDiscContra})}{>} \sum_{i=i_n+1}^{i_{n+1}} |s^*_i|,
\end{equation}
while for $s^*$ we have
\begin{equation}
\label{eq:s*IndDisc}
\left| \left|\sum_{i=0}^{i_{n+1}} f_i - \sum_{i=0}^{i_{n}} s^*_i \right| -\left|\sum_{i=i_n+1}^{i_{n+1}} s^*_i\right| \right|
\leq 
\left| \sum_{i=0}^{i_{n+1}} f_i - \sum_{i=0}^{i_{n+1}} s^*_i\right| < 1.
\end{equation}
Therefore, because of $\sum_{i=0}^{i_{n}} s^*_i = \sum_{i=0}^{i_{n}} s_i$ and (\ref{eq:sIndDisc}), we get
the contradiction
\begin{equation}
\label{eq:s*IndDisc1}
\left|\sum_{i=0}^{i_{n+1}} f_i - \sum_{i=0}^{i_{n}} s_i  \right| < 
1 + \left|\sum_{i=i_n+1}^{i_n+1} s^*_i \right| \leq 1+ \sum_{i=i_n+1}^{i_{n+1}} |s^*_i| \leq \left|\sum_{i=0}^{i_{n+1}} f_i - \sum_{i=0}^{i_{n}} s_i  \right|,
\end{equation}
which falsifies the assumption of Equ.~(\ref{eq:indStepDiscContra}) and proves 
$\sum_{j=0}^{i_{n+1}} |s_i| = \sum_{i=0}^{i_{n+1}} |s^*_i|$. 
This means, there is no other spike train $s^* \in \mathring{B}^A_1(f) \cap \mathbb{S}_{1}$ that is strictly sparser than $\mbox{IF}_1(f)$.
Note that the proof in continuous time works analogously as the induction iterates over the discrete number of triggering points.
\end{proof}




\end{document}